\definecolor{Gray}{gray}{0.85}
\newtheorem{conclusion}{Conclusion}
\newcommand{\texorpdfstring}[2]{#1}
\begin{document}

\title{Shape and Content
\thanks{Kotek, Veith and Zuleger were supported by the Austrian National Research Network S11403-N23 (RiSE) of the Austrian Science Fund (FWF) and by the Vienna Science and Technology Fund (WWTF) through grants PROSEED and ICT12-059. Simkus was supported by the FWF grant P25518 and the WWTF grant ICT12-15}
}
\subtitle{Incorporating Domain Knowledge into Shape Analysis}

\author{D. Calvanese\inst{1} \and T. Kotek\inst{2} \and M. \v{S}imkus\inst{2}
\and H. Veith\inst{2} \and F. Zuleger\inst{2}}

\institute{
Free University of Bozen-Bolzano
\and
Vienna University of Technology\\
}

\maketitle

%
%
%
%
%
%
%
%
%
%
%
%
%
%

\begin{abstract}
The verification community has studied dynamic data structures primarily in a bottom-up way by analyzing pointers
and the shapes induced by them.
Recent work in fields such as separation logic has made significant progress in extracting shapes from program source code.
Many real world programs
however manipulate complex data whose structure and content is most naturally described by formalisms
from object oriented programming and databases.
In this paper, we look at the verification of programs with dynamic data structures from the perspective
of content representation.
Our approach is based on description logic, a widely used knowledge
representation paradigm which gives a logical underpinning for diverse modeling frameworks
such as UML and ER.
Technically, we assume that we have separation logic shape invariants
obtained from a shape analysis tool, and requirements on the program data in terms of description logic.
We show that the two-variable fragment of first order logic with counting and trees 
can be used as a joint framework to embed suitable fragments of description logic and
separation logic.
\end{abstract}


\global\long\def\SL{\mathrm{\mathbf{SL}}}
\global\long\def\SLls{\mathrm{\mathbf{SL{\scriptstyle ls}}}}
\global\long\def\nil{\mathrm{null}}
\global\long\def\tree{\mathrm{tree}}
\global\long\def\lss{\mathrm{list-of-lists}}
\global\long\def\ls{\mathrm{ls}}
\global\long\def\iffdef{\mathrm{iff}}
\global\long\def\arrowfin{\stackrel{\mathrm{{\scriptscriptstyle fin}}}{\mathrm{\to}}}
\global\long\def\false{\mathbf{F}}
\global\long\def\true{\mathbf{T}}
\global\long\def\emp{\mathrm{emp}}
\global\long\def\NR{\mathsf{N_{R}}}
\global\long\def\NRF{\mathsf{N_{F}}}
\global\long\def\NC{\mathsf{N_{C}}}
\global\long\def\NCshp{\mathsf{N_{C}^{shp}}}
\global\long\def\NCdom{\mathsf{N_{C}^{dom}}}
\global\long\def\NI{\mathsf{N_{I}}}
\global\long\def\NIdom{\mathsf{N_{I}^{dom}}}
\global\long\def\NIvar{\mathsf{N_{I}^{var}}}
\global\long\def\NIloc{\mathsf{N_{I}^{free}}}
\global\long\def\lb{[\![}
\global\long\def\rb{]\!]}
\global\long\def\bv{\,\brokenvert\:}
\global\long\def\mm{\mathcal{M}}
\global\long\def\nn{\mathcal{N}}
\global\long\def\abo{\mbox{abort}}
\global\long\def\mmdom{M}
\global\long\def\sM{\mathscr{s}(\mm)}
\global\long\def\tm{\tau_{m}}
\global\long\def\tauField{\tau_{\mathrm{fields}}}
\global\long\def\tauVar{\tau_{\mathrm{var}}}
\global\long\def\eqqm{\stackrel{{\scriptscriptstyle ?}}{=}}
\global\long\def\bigsqcap{\sqcap}
\global\long\def\L{\mathcal{L}}
\global\long\def\DL{\mathcal{ALCHOIQ}br}
\global\long\def\err{\mbox{err}}
\global\long\def\FO{\mathrm{FO}}

\newcommand{\Addr}{Addresses}
\newcommand{\nncur}{\nn_{cur}}
\newcommand{\nngho}{\nn_{gho}}
\newcommand{\mmcur}{\mm_{cur}}
\newcommand{\mmgho}{\mm_{gho}}
\newcommand{\gho}{{gho}}
\newcommand{\MemPool}{MemPool}
\newcommand{\annot}{{annot}}
\newcommand{\sannot}{{shp}}
\newcommand{\vercond}{{vercond}}
\newcommand{\control}{{ctrl}}
\newcommand{\pp}{\mathcal{P}}
\renewcommand{\ll}{LL}
\newcommand{\pre}{{pre}}
\newcommand{\post}{{post}}
\newcommand{\LLPrograms}{LLPrograms}
\newcommand{\linit}{\ell_{init}}
\newcommand{\Init}{Init}
\newcommand{\ctwotree}{CT^2}
\newcommand{\ctwo}{C^2}
\newcommand{\Fone}{F_1}
\newcommand{\Ftwo}{F_2}
\newcommand{\Fnosub}{F}

\newif\ifCont
\Conttrue

\ifCont
\newcommand{\dmnCnt}{cnt}
\newcommand{\domainContent}{content}
\newcommand{\DomainContent}{Content}
\else
\newcommand{\dmnCnt}{dmn}
\newcommand{\domainContent}{domain}
\newcommand{\DomainContent}{Domain}
\fi

\newif\ifSkip
\Skipfalse

\newif\ifiFM
\iFMfalse

\pagestyle{plain}

\newcommand{\nop}[1]{{}}

\section{Introduction}\label{se:intro}

The manipulation and storage of complex information in imperative programming languages is often achieved by dynamic data structures.
The verification of programs with dynamic data structures, however, is notoriously difficult, and is a highly active area of current research.
While much progress has been made recently in analyzing and verifying the \emph{shape} of dynamic data structures, most notably by separation logic (SL)~\cite{pr:Reynolds02,ar:IO01}, the \emph{content} of dynamic data structures has not received the same attention.

In contrast, disciplines as data\-bases, modeling and knowledge representation have developed highly-successful theories for \emph{content representation and verification}.
These research communities typically model reality by classes and binary relationships between these classes.
For example, the database community uses \emph{entity-relationship (ER)} diagrams, and \emph{UML} diagrams have been studied in requirements engineering.
Content representation in the form of UML and ER has become a central pillar of \emph{industrial software engineering}.
In complex software projects, the source code is usually accompanied by \emph{design documents} which provide extensive documentation and models of data structure content.
This documentation is both an opportunity and a challenge for program verification. Recent hardware verification papers have demonstrated how design diagrams can be integrated into an industrial verification workflow \cite{pr:JLTT08}.

In this paper, we propose the use of {\em Description Logics} (DLs) for the formulation of content specifications.
DLs are a well established and highly popular family of logics for representing knowledge in artificial intelligence \cite{bk:Baader2003}.
In particular, DLs allow to precisely model and reason about UML and ER diagrams~\cite{Berardi200570,er2007}.
DLs are mature and well understood, they have good algorithmic properties and
have efficient reasoners.
DLs are very readable and form a natural base for developing specification languages. For example, they are the logical backbone of the Web Ontology Language (OWL) for the Semantic Web~\cite{owl2-overview}.
DLs vary in expressivity and complexity, and are usually selected according to the expressivity needed to formalize the given target domain.

Unfortunately, the existing content representation technology cannot be applied directly for the verification of content specifications of pointer-manipulating programs.
This is to due the strict separation between high-level content descriptions such as UML/ER and the way data is actually stored.
For example, query languages such as SQL and Datalog provide a convenient abstraction layer for formulating data queries while ignoring how the database is stored on the disk.
In contrast, programs with dynamic data structures manipulate their data structures directly.
Moreover, database schemes are usually static while a program may change the content of its data structures over time.

The main goal of this paper is to develop a verification methodology that allows to employ DLs for formulating and verifying content specifications of pointer-manipulating programs.
We propose a two-step Hoare-style verification methodology:
First, existing shape-analysis techniques are used to derive shape invariants.
Second, the user strengthens the derived shape invariants with content annotations; the resulting verification conditions are then checked automatically.
Technically, we employ a very expressive DL (henceforth called $\L$), based on the so called $\mathcal{ALCHOIF}$, which we specifically tailor to better support reasoning about complex pointer structures.
For shape analysis we rely on the SL fragment from \cite{pr:BCO05}.
In order to reason automatically about the verification conditions involving DL as well as SL formulae, we identify a powerful decidable logic $\ctwotree$ which incorporates both logics~\cite{pr:CharatonikWitkowski13}.
We believe that our main contribution is conceptual, integrating these different formalisms for the first time.
While the current approach is semi-manual, our long term goal is to increase the automatization of the method.

\newpage 
\textbf{Overview and Contributions:}
\begin{itemize}
\item  In Section \ref{se:logics}, we introduce our formalism.
    In particular, we formally define \emph{memory structures} for representing
    the heap and we study the DL $\L$ as a formalism for expressing \emph{content properties} of memory structures.

\item In Section \ref{se:logics}, we further present the building blocks for our verification methodology:
    We give an embedding of $\L$ and an embedding of a fragment of the SL from \cite{pr:BCO05} into $\ctwotree$ (Lemmata~\ref{lem:dl-to-ct2} and~\ref{lem:sl-to-dl-2-alpha}).
    Moreover, we give a complexity-preserving reduction of satisfiability of $\ctwotree$ over memory structures to finite satisfiability of $\ctwotree$ (Lemma \ref{lem:fin-sat-mem}).

\item In Section \ref{se:content-analysis}, we describe a program model for sequential imperative heap-manipulating programs without procedures.
    Our main contribution is a Hoare-style proof system for verifying content properties on top of (already verified) shape properties stated in SL.

\item Our main technical result is a precise backward-translation of content properties along loop-less code (Lemma~\ref{lem:backwards--}).
    This backward-translation allows us to reduce the inductiveness
    of the Hoare-annotations to satisfiability in $\ctwotree$.
    Theorem~\ref{th:soundness-completeness} states the soundness and completeness of this reduction.
\end{itemize}

\subsection{Running Example: Information System of a Company} \label{se:running-intro}
\begin{wrapfigure}[7]{r}{80mm}
\begin{center}
\vspace{-42pt}
\includegraphics[scale=0.9]{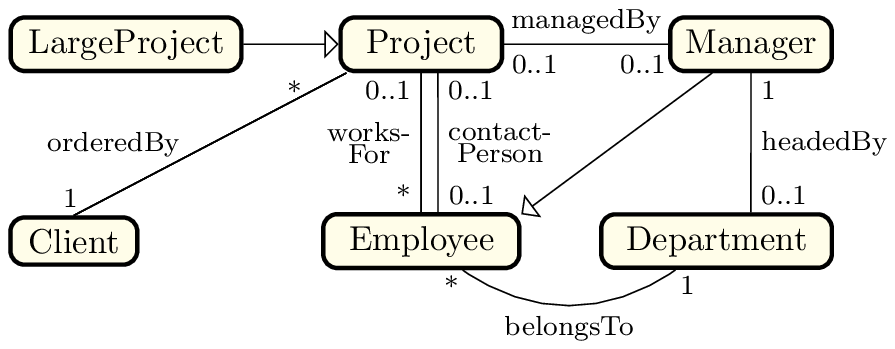}
\end{center}
\end{wrapfigure}\vspace{3pt}
Our running example will be a simple information system for a company with the following UML diagram:.
The UML gives the relationships between entities in the information system, but says nothing regarding the
implementations of the data structures that hold the data.
We focus mostly on projects, and on the employees and managers which work on them.
Here is an informal
description of the programmers' intention.
The employees and projects are stored in two lists, both using the $next$ pointer.
The heads of the two lists are $pHd$ and $eHd$ respectively.
Here are some properties of our information system.
(i)-(iii) extends the UML somewhat. (iv)-(vi) do not appear in the UML,
but can be expressed in DL:

\begin{enumerate}[(i)]
 \item Each employee in the list of employees has a pointer $wrkFor$ to a project on
the list of projects, indicating the project that the employee is working on (or
to $\nil$, in case no project is assigned to that employee).
\item Each project in the list has a pointer $mngBy$ to the employee list,
indicating the manager of the project (or to $\nil$, if the project doesn't have one).
\item Employees have a Boolean field $isMngr$ marking them as managers, and only they can manage projects.
 \item The manager of a project works for the project.
 \item At least 10 employees work on each large project. 
 \item The contact person for a large-scale project is a manager.
\end{enumerate}
We will refer to these properties as the \emph{system invariants}.

The programmer has written a program $S$ (stated below) for verification.
The programmer has the following intuition about her program:
The code $S$ adds a new project $proj$ to the project list, and assigns to it all employees in the employee list which are not assigned to any project.

\begin{wrapfigure}[9]{r}{50mm}
\begin{center}
\vspace{-52pt}
\begin{minipage}[t]{0.45\textwidth}
\begin{lstlisting}[basicstyle={\small},commentstyle={\color{mygreen}},escapechar={!},mathescape=true,morekeywords={if,while,do,od,fi,else,then,func,end}]
$\ell_{\mathit{b}}:$ proj$\,:=\,$new;
   proj.next$\,:=\,$pHd;
   pHd$\,:=\,$proj;
   e$\,:=\,$eHd;
$\ell_l:$  while $\sim$(e$\,=\,\nil$) do
     if (e.wrkFor$\,=\,\nil$)
     then e.wrkFor$\,:=\,proj$;
     e$\,:=\,$e.next;
   od
$\ell_{\mathit{e}}:$ end;
\end{lstlisting}
\end{minipage}
\vspace{10pt}
\end{center}
\end{wrapfigure}

The programmer wants to verify that the system invariants are true after the execution of $S$,
if they were true in the beginning (1).
Note that \emph{during} the execution of the code, they might not be true!
Additionally the programmer wants to verify that after executing $S$, the project list
has been extended by $proj$, the employee list still contains the same employees and indeed
all employees who did not work for a project before now work for project $proj$ (2).
We will formally prove the correctness of $S$ following our verification
methodology discussed in the introduction.
In Section~\ref{se:expressing-content-invariants} we describe how our DL can be used for specifying the verification goals (1) and (2).
In Section~\ref{se:general-meth-example} we state verification conditions that allow to conclude
the correctness of (1) and (2) for $S$.

\section{Logics for Invariant Specification} \label{se:logics}

\subsection{Memory Structures \label{subse:structures--1}\label{subse:memory-structures--1}}

We use ordinary first order structures to represent memory in a
precise way. A \emph{structure} (or, \emph{interpretation}) is a tuple
$\mm=(M,\tau,\cdot)$, where
(i) $M$ is an infinite set (the \emph{universe}),
(ii) $\tau$ is a set of \emph{constants} and \emph{relation symbols}
  with an associated non-negative arity, and
(iii) $\cdot$ is an \emph{interpretation function}, which assigns to
  each constant $c\in \tau$ an element $c^\mm\in M$, and to each
  $n$-ary relation symbol $R\in \tau$ an $n$-ary relation $R^{\mm}$
  over $M$. Each relation is either unary or binary
  (i.e. $n\in\{1,2\}$).
  Given $A\subseteq M$, a binary $R^\mm$, $R^\mm$ and $e\in A^\mm$, we may use the notation
  $R^\mm(e)$ if $R^\mm$ is known to be a function over $A^\mm$.

A {\em Memory structure} describes a snapshot of the heap and the
local variables. We assume sets
$\tauVar\subseteq \tau$ of constants $\tauField\subseteq \tau$ of
binary relation symbols.  We will later employ these symbols for
variables and fields in programs.  A {\em memory structure} is a structure
$\mm=(M,\tau,\cdot)$ that satisfies the following conditions:
\begin{enumerate}[(1)]
\item 
  $\tau$ includes the constants $o_{\nil}$, $o_\true$, $o_\false$.
\item $\tau$ has the unary relations $\Addr$, $Alloc$, $PossibleTargets$, $\MemPool$,
  and $Aux$.
\item $Aux^\mm=\{o_\nil^\mm,o_\true^\mm,o_\false^\mm\}$ and $|Aux^\mm|=3$.
\item $\Addr^\mm\cap Aux^\mm= \emptyset$ and $\Addr^\mm \cup Aux^\mm =
  M$.
\item $Alloc^\mm$, $PossibleTargets^\mm$ and $\MemPool^\mm$ form a partition of $\Addr^\mm$.
\item $c^\mm\in M\backslash \MemPool^\mm$ for every constant $c$ of $\tau$.
\item For all $f\in  \tauField$, $f^\mm$ is a function from
  $\Addr^\mm$ to $M\backslash \MemPool^\mm$.
\item If $e\in \MemPool^\mm$, then $f^\mm(e)\in \{o_\nil^\mm,o_\false^\mm\}$.
\item $R^{\mm}\subseteq (M\backslash \MemPool^\mm)^n$ for every\footnote{Here $n\in \{1,2\}$.}
  $n$-ary $R\in \tau \setminus (\{\MemPool \}\cup \tauField)$.
 \item $Alloc^\mm$ and $PossibleTargets^\mm$  are finite. $\MemPool^\mm$ is infinite.

\end{enumerate}

We explain the intuition behind memory structures. Variables in
programs will either have a Boolean value or be pointers. Thus, to
represent $\nil$ and the Boolean values $\true$ and $\false$, we
employ the auxiliary relation $Aux^\mm$ storing 3 elements
corresponding to the 3 values. $\Addr^{\mm}$ represents the memory
cells.
The relation $Alloc^\mm$ is the set of allocated
cells, $PossibleTargets^\mm$ contains all cells which are not allocated,
but are pointed to by allocated cells (for technical reasons it possibly contains some other unallocated cells).
$\MemPool^\mm$ contains the cells which are not allocated, do
not have any field values other than $\nil$ and $\false$, are not
pointed to by any field, do not participate in any other relation and
do not interpret any constant (see (6-9)). The memory cells in $\MemPool$ are
the candidates for allocation during the run of a program.
 Since the allocated memory should by finite at any point of the
execution of a program, we require that
$Alloc^\mm $ and $PossibleTargets^\mm$
are finite (see (10)), while the available memory $\Addr^\mm$ and the memory pool $\MemPool^\mm $
are infinite. Finally, each
cell is seen as a record with the fields of
$\tauField$.

\subsection{The Description Logic
 $\L$}

$\L$ is defined w.r.t. a vocabulary $\tau$ consisting of
relation and constant symbols.  \footnote{In DL terms, $\L$
  corresponds to Boolean $\mathcal{ALCHOIF}$ knowledge bases with the
  additional support for role intersection, role union, role
  difference and product roles.  }

\begin{definition}[Syntax of $\L$]
  The sets of \emph{roles} and  \emph{concepts} of $\L$ is defined
  inductively:
  (1) every unary relation symbol  is a concept (\emph{atomic concept});
  (2)  every constant symbol is a concept;
  (3)  every binary relation symbol is a role (\emph{atomic role});
  (4) if $r,s$ are roles, then $r\cup s$, $r\cap s$,
    $r\backslash s$ and $r^{-}$ are roles;
 (5)  if $C,D$ are concepts, then so are
    $C\sqcap D$, $C\sqcup D$, and $\neg C$;
  (6) if $r$ is a role and $C$ is a concept, then $\exists r.C$ is
    also a  concept;
(7)  if $C,D$ are concepts, then $C\times D$ is a role
(\emph{product role}).

  The set of \emph{formulae} of $\L$ is the closure under $\land$,$\lor$,$\neg$,$\to$ of the
  atomic formulae:
  $C\sqsubseteq D$ (\emph{concept inclusion}), where $C,D$
    are concepts;
  $r\sqsubseteq s$ (\emph{role inclusion}), where $r,s$
    are roles; and
  $func(r)$ (\emph{functionality assertion}), where $r$
    is a role.

\end{definition}

 \begin{definition}[Semantics of $\L$]
   The semantics is given in terms of
   structures $\mm=(M,\tau,\cdot)$.
   The extension of $\cdot^{\mm}$ from the atomic relations and constants in $\mm$
    and the
   satisfaction relation $\models$ are
   given below. If $\mm\models \varphi$,
   then $\mm$ is a \emph{model} of $\varphi$. We write $\psi \models
   \varphi$ if every model of $\psi$ is also a model of $\varphi$.
\[
\begin{array}{l}
\begin{array}{llllll}
  (C\sqcap  D)^{\mm} & = & C^{\mm}\cap D^{\mm} \hphantom{M\setminus \ \ \ }& (r\sqcap  s)^{\mm} &=& r^{\mm}\cap s^{\mm} \\

  (C\sqcup  D)^{\mm} & = & C^{\mm}\cup D^{\mm} \hphantom{M\setminus \ \ \ }&  (r\sqcup  s)^{\mm}  &=&  r^{\mm}\cup s^{\mm} \\

  (\neg C )^{\mm} & = & M\setminus C^{\mm} \hphantom{C^{\mm}\cap \ \ \ }& (r\setminus  s)^{\mm} &=&  r^{\mm}\setminus s^{\mm} \\

 (C\times D)^{\mm} & = & C^{\mm}\times D^{\mm} & (r^-)^{\mm} &=&  \{(e,e')\mid (e',e)\in r^{\mm} )\}\\

(\exists r.C)^{\mm} & = & \{e\mid \exists e' : (e,e')\in r^{\mm}\}\ \ \ \\

\mm\models C\sqsubseteq D & \mathrm{if}  & C^\mm\subseteq D^\mm  &
\mm\models r\sqsubseteq s& \mathrm{if}  & r^\mm\subseteq s^\mm \\
\end{array}\\
\begin{array}{llllll}
\mm\models func(r) & \mathrm{if}   \{(e,e_1),(e,e_2)\}\subseteq r^\mm~\mathrm{implies}~e_1=e_2\\
\end{array}
\end{array}
\]
\end{definition}
The closure of $\models$ under $\land$ $\lor$,$\neg$,$\to$ is defined in the natural way.
We abbreviate:\\
$\top=C\sqcup \neg C$, where $C$ is an arbitrary atomic concept and $\bot=\neg \top$;
$\alpha \equiv \beta$ for the formula $\alpha\sqsubseteq
  \beta\land \beta\sqsubseteq \alpha$; and
 $\exists r$ for  the concept  $\exists r.\top$;
$(o,o')$ for the role $o\times o'$.
Note that $\top^\mm=M$ and $\bot^\mm=\emptyset$
  for any structure $\mm=(M,\tau,\cdot)$.

%
%
%
%
%
%
%
%

\subsection{Running Example: \DomainContent~Invariants in $\L$} 
\label{se:expressing-content-invariants}
\label{se:EPM-example-DL}
Now we make the example from Section \ref{se:running-intro} more precise.
The concepts $ELst$ and $PLst$ are interpreted as the sets of elements in
the employee list resp. the project list.
$mngBy$, $isMngr$ and $wrkFor$ are roles. $o_{eHd}$ and $o_{pHd}$ are the constants
which correspond to the heads of the two lists.
The invariants of the systems are:\\
\begin{tabular}{lllr}
\arrayrulecolor{Gray} \hline
The emploee and project lists are allocated:\ \ &
$PLst\sqcup ELst$ & $\sqsubseteq$ & $Alloc$\\
\arrayrulecolor{Gray} \hline

Projects and employees are distinct:&
$PLst\sqcap ELst$ & $\sqsubseteq$ & $\bot$ \\
\arrayrulecolor{Gray} \hline

$wrkFor$ is set to null for projects:&
$PLst$ & $\sqsubseteq$ & $\exists wrkFor.o_{\nil}$ \\
\arrayrulecolor{Gray} \hline

$mngBy$ is set to null for employees:&
$ELst$ & $\sqsubseteq$ & $\exists mngBy.o_{\nil}$ \\
\arrayrulecolor{Gray} \hline

$wrkFor$ of employees in the list point\\
to projects in the list or to null:&
$\exists wrkFor^{-}.ELst$ & $\sqsubseteq$ & $PLst\sqcup o_{\nil}$ \\
\arrayrulecolor{Gray} \hline

$isMngr$ is a Boolean field:&
$\exists isMngr^{-}.ELst$ & $\sqsubseteq$ & $Boolean$ \\
\arrayrulecolor{Gray} \hline
\end{tabular}\\
\begin{tabular}{lr}
$mngBy$ of projects point \ \ \ \ \ \ \ \ \ \ \ \ \ \ \ \ \ \ \ \ \ \ \ \ \ & $\exists mngBy^{-}.PLst\sqsubseteq  $
 \\
to managers or null: &
$(ELst\sqcap \exists isMngr.o_{\true}) \sqcup o_\nil$\\
\arrayrulecolor{Gray} \hline

The manager of a project \\

must work for the project:&
$mngBy \cap (\top\times ELst)$
  $\sqsubseteq wrkFor^{-}$\\
  \arrayrulecolor{Gray} \hline

 \end{tabular}

Let the conjunction of the invariants be given by $\varphi_{invariants}$.

Consider $S$ from Section \ref{se:intro}.
The states of the heap before and after the execution of $S$
can be related by the following $\L$ formulae. $\varphi_{lists-updt}$ and $\varphi_{p-assgn}$.
$\varphi_{lists-updt}$ states that the employee list at the end of the program ($ELst$)
is equal to the employee list at the beginning of the program ($ELst_\gho$),
and that the project list at the end of the program ($PLst$)
is the same as the project list at the beginning of the program ($PLst_\gho$),
except that $PLst$ also contains the new project $o_{proj}$.
$ELst_{\gho}$ and $wrkFor_{\gho}$ are {\em ghost relation symbols},
whose interpretations hold the corresponding values at the beginning of $S$.
\begin{eqnarray*}
\varphi_{lists-updt} &=& ELst_{\gho} \equiv ELst \land  PLst_{\gho}\sqcup o_{proj}  \equiv  PLst\\
\varphi_{p-assgn} &=& ELst_{\gho}\sqcap\exists wrkFor_{\gho}.o_{\nil}  \equiv ELst\sqcap\exists wrkFor.o_{proj}
\end{eqnarray*}
\subsubsection{Ghost symbols}
As discussed in Section \ref{se:EPM-example-DL},
in order to allow invariants of the form
$
\begin{array}{lll}
 \varphi_{lists-updt} &=& ELst_{\gho} \equiv ELst  \land
  PLst_{\gho}\sqcup o_{proj}  \equiv  PLst
\end{array}
$\\
we need ghost symbols.
We assume $\tau$ contains, for every symbol e.g. $s \in \tau$, the symbol $s_{\gho}$.
Therefore, memory structures actually contain {\em two}
snapshots of the memory:
one is the {\em current} snapshot, on which the program operates,
and the other is a {\em ghost} snapshot, which is a snapshot of the memory
 at the beginning of the program, and which the program does not change or interact with.
We denote the two underlying memory structures of $\mm$ by
$\mmcur$ and $\mmgho$.
Since the interpretations of ghost symbols should not change throughout the run of a program,
they will sometime require special treatment.

\subsection{The Separation Logic Fragment $\SLls$} \label{se:SLls}

The SL that we use is denoted $\SLls$, and is
the logic from \cite{pr:BCO05} with lists and multiple pointer fields, but without trees.
It can express that the heap is partitioned into lists and individual cells.
For example, to express that the heap contains only the two lists $ELst$ and $PLst$
we can write the $\SLls$ formula $\ls(pHd,\nil)*\ls(eHd,\nil)$.

We denote by $var_{i}\in Var$ and $f_{i}\in Fields$ the sets of variables respectively
fields to be used in $\SLls$-formulae. $var_i$ are constant symbols. $f_i$ are binary relation symbols
always interpreted as functions.
An $\SLls$-formula $\Pi\bv\Sigma$ is the conjunction
of a {\em pure part} $\Pi$ and a {\em spatial part} $\Sigma$.
$\Pi$ is a conjunction of equalities and inequalities of variables and $o_{null}$.
$\Sigma$ is a {\em spatial conjunction} $\Sigma = \beta_1 *\cdots *\beta_r$ of formulae of the form
$\ls(E_1,E_2)$ and $var\mapsto[f_{1}:E_{1},\ldots,f_{k}:E_{k}]$, where
each $E_i$ is a variable or $o_{null}$. 
Additionally, $\Sigma$ can be $emp$ and $\Pi$ can be $\true$. 
When $\Pi=\true$ we write $\Pi\bv\Sigma$ simply as $\Sigma$.

The memory model of \cite{pr:BCO05} is very similar to ours.
We give the semantics of $\SLls$ in memory structures directly due to space constraints. 
\ifiFM 
See the full paper \cite{full-version}
\else
See the appendix
\fi
for a discussion of the standard semantics of $\SLls$.
$\Pi$ is interpreted in the natural way. $\Sigma$ indicates that $Alloc^\mm$
is the disjoint union of $r$ parts $P_1^\mm,\ldots,P_r^\mm$.
If $\beta_i$ is of the form $var\mapsto[f_{1}:E_{1},\ldots,f_{k}:E_{k}]$
then $|P_i^\mm|=1$ and, denoting $v\in P_i^\mm$, $f_j^\mm(v)=E_j^\mm$.
If $\beta_i$ is of the form $\ls(E_{1},E_2)]$, then
$|P_i^\mm|$ is a list from $E_1^\mm$ to $E_2^\mm$. $E_2^\mm$ might not belong to $P_i^\mm$.
If $\Sigma=emp$ then $Alloc^\mm=\emptyset$. 

\subsection{The Two-variable Fragment with Counting and Trees $\ctwotree$}
$\ctwo$ is the subset of first-order logic whose formulae contain at most two variables,
extended with counting quantifiers $\exists^{\leq k}$, $\exists^{\geq k}$ and $\exists^{= k}$ for all $k\in \mathbb{N}$.
W. Charatonik and P. Witkowski \cite{pr:CharatonikWitkowski13} recently studied an extension of $\ctwo$
which trees which, as we will see, contains both our DL and our SL.
$\ctwotree$ is the subset of second-order logic of the form
$
 \exists \Fone \, \varphi(\Fone) \land \varphi_{forest}(\Fone)
$
where $\varphi\in\ctwo$ and $\varphi_{forest}(\Fone)$ says that $\Fone$ is a forest.
Note that $\ctwotree$ is not closed under negation, conjunction or disjunction.
However, $\ctwotree$ is closed under conjunction or disjunction with $\ctwo$-formulae.

 A $\ctwotree$-formula $\varphi$ is \emph{satisfiable in a memory
    structure} if there is a memory structure $\mm$ such that
  $\mm\models \varphi$. We write $\psi \models_m \varphi$ if
  $\mm\models \psi$ implies $\mm\models \varphi$ for every memory
  structure $\mm$.
Lemma \ref{lem:fin-sat-mem}
states the crucial property of $\ctwotree$ that we use.
It follows from \cite{pr:CharatonikWitkowski13}, by reducing
the memory structures to closely related finite structures.
\footnote{In fact \cite{pr:CharatonikWitkowski13} allows existential quantification over two forests, but will only need one. }
\ifiFM
(see full version \cite{full-version}).
\else
(see Appendix \ref{app:nexptime}).
\fi
\begin{lemma}\label{lem:fin-sat-mem}
Satisfiability of $\ctwotree$ by memory structures is in NEXPTIME.
\end{lemma}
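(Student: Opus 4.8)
The plan is to reduce satisfiability of a $\ctwotree$-formula over \emph{memory structures} to finite satisfiability of a related $\ctwotree$-formula, and then invoke the NEXPTIME upper bound for finite satisfiability of $\ctwotree$ established in~\cite{pr:CharatonikWitkowski13}. The obstacle is that memory structures have an \emph{infinite} universe (conditions (1) and (10): $\Addr^\mm$ and $\MemPool^\mm$ are infinite), whereas the Charatonik--Witkowski result concerns finite structures. The key observation I would exploit is that the infinite part of a memory structure is almost entirely inert: by conditions (6)--(9), every cell in $\MemPool^\mm$ points only to $o_\nil^\mm$ or $o_\false^\mm$ under each field, is not pointed to by any field, interprets no constant, and participates in no other relation. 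Thus all cells of $\MemPool^\mm$ are mutually indistinguishable in the structure except through the forest $\Fone$, and a $\ctwotree$-formula can only ``see'' finitely much of them.

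\textbf{Step 1: Axiomatize memory structures in $\ctwo$.} First I would encode conditions (1)--(9) as a single $\ctwo$-sentence $\varphi_{mem}$. Conditions (3), (6), (7), (8), (9) are first-order statements with at most two variables and are readily expressible using the unary predicates $\Addr, Alloc, PossibleTargets, \MemPool, Aux$, the constants, and functionality of the fields in $\tauField$; functionality itself is a $\ctwo$-expressible counting constraint. Because $\ctwotree$ is closed under conjunction with $\ctwo$-formulae (as noted just before the lemma), I can form $\varphi \land \varphi_{mem}$ and stay inside $\ctwotree$. The finiteness/infiniteness constraints of (10) are the only part of the memory-structure definition not directly expressible, and these I handle by the reduction rather than by axiomatization.

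\textbf{Step 2: Truncate the memory pool to a finite structure.} Given a model $\mm$ of $\varphi$ of size bounded by the NEXPTIME decision procedure's witnesses, I would argue that one may assume $\MemPool^\mm$ contributes only finitely many elements that are actually touched: the finite satisfiability procedure produces a finite model, and the only role $\MemPool^\mm$ plays is to supply fresh allocatable cells and to be a ``sink'' under $\Fone$ and the fields. I would replace the requirement that $\MemPool^\mm$ be infinite by requiring merely that it be nonempty (or of some fixed finite size sufficient for the formula), and similarly drop the finiteness demands on $Alloc^\mm, PossibleTargets^\mm$ since finite satisfiability already yields a finite model. Conversely, from any finite model of $\varphi \land \varphi_{mem}'$ (the modified axiomatization) I would construct a genuine memory structure by adjoining countably many fresh $\MemPool$-elements, each sent by every field to $o_\nil^\mm$ or $o_\false^\mm$, left out of $\Fone$ as isolated roots, and excluded from all other relations. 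Conditions (6)--(10) then hold by construction, and since these added elements satisfy the same $\ctwo$-type as existing pool elements, the truncated and extended structures agree on all $\ctwotree$-formulae. This establishes a satisfiability-preserving, polynomial-time reduction in both directions.

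\textbf{Step 3: Apply the known bound.} With the reduction in place, satisfiability of $\ctwotree$ over memory structures is equivalent to finite satisfiability of a $\ctwotree$-formula of polynomially bounded size, which by~\cite{pr:CharatonikWitkowski13} is in NEXPTIME; the footnote's remark that we need only a single forest rather than two is exactly what keeps us inside their decidable fragment. The main obstacle I anticipate is \textbf{Step 2}: making precise that the infinite memory pool can be safely truncated and re-expanded without affecting $\ctwotree$-truth. The delicate point is the interaction with the forest $\Fone$ --- I must verify that the added pool elements can always be taken as roots of trivial trees (or leaves in a controlled way) so that $\varphi_{forest}(\Fone)$ is preserved, and that no $\ctwo$-formula with counting can distinguish a finite pool from an infinite one once enough ``spare'' elements are present, which is why bounding the count of pool elements by the largest counting threshold appearing in $\varphi$ suffices.
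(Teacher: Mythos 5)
Your proposal is correct and follows essentially the same route as the paper's proof in Appendix~\ref{app:nexptime}: axiomatize conditions (3)--(9) as a $\ctwo$-conjunct $\psi_m$, invoke the NEXPTIME bound for finite satisfiability of $\ctwotree$ from \cite{pr:CharatonikWitkowski13}, and handle the cardinality condition (10) by adjoining infinitely many inert $\MemPool$-elements to a finite model (resp.\ truncating the infinite pool). You are in fact somewhat more explicit than the paper about the truncation direction and about why the mutual indistinguishability of pool elements (up to the largest counting threshold, and with the added elements taken as isolated $\Fone$-roots) makes this back-and-forth harmless.
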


\subsection{Embedding $\L$ and $\SLls$ in $\ctwotree$}\label{se:embed}

$\L$ has a fairly standard reduction (see
e.g. \cite{DBLP:journals/ai/Borgida96}) to $\mathcal{C}^2$:
\begin{lemma}\label{lem:dl-to-ct2}
For every vocabulary, there exists
$tr:\L(\tau)\to\ctwo(\tau)$ such that for every $\varphi\in\L(\tau)$,
$\varphi $ and $tr(\varphi)$ agree on the truth value of all $\tau$-structures.
\end{lemma}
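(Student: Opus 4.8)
The plan is to define the translation $tr$ by simultaneous structural induction on roles, concepts, and formulae, working under a strict budget of two variables $x,y$. For each concept $C$ I introduce two formulae $tr_x(C)$ and $tr_y(C)$, each with a single free variable ($x$ resp.\ $y$), intended to capture membership in $C^\mm$; for each role $r$ I introduce $tr_{x,y}(r)$ with free variables among $\{x,y\}$ capturing $(x,y)\in r^\mm$. The concept clauses are the expected ones: $tr_x(A)=A(x)$ for an atomic concept, $tr_x(c)=(x=c)$ for a constant viewed as a (nominal) concept, and $\sqcap,\sqcup,\neg$ commute with $\land,\lor,\neg$. The crucial clause is the existential restriction $tr_x(\exists r.C)=\exists y\,(tr_{x,y}(r)\land tr_y(C))$. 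For roles I set $tr_{x,y}(R)=R(x,y)$ on atoms; union, intersection and difference translate componentwise, so that $tr_{x,y}(r\cup s)=tr_{x,y}(r)\lor tr_{x,y}(s)$ and likewise $\cap$, $\setminus$ become $\land$, $\land\,\neg$; the inverse is handled by swapping the two variables, $tr_{x,y}(r^-)=tr_{y,x}(r)$; and the product role by $tr_{x,y}(C\times D)=tr_x(C)\land tr_y(D)$. Finally, inclusions become universally quantified implications, $tr(C\sqsubseteq D)=\forall x\,(tr_x(C)\to tr_x(D))$ and $tr(r\sqsubseteq s)=\forall x\forall y\,(tr_{x,y}(r)\to tr_{x,y}(s))$, while the outer Boolean structure of $\L$-formulae is preserved verbatim.

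The key point, and the reason two variables suffice, is the invariant that $tr_x(C)$ (resp.\ $tr_y(C)$) has exactly $x$ (resp.\ $y$) as its only free variable, and $tr_{x,y}(r)$ has free variables among $\{x,y\}$. I would verify this invariant as part of the induction. It clearly holds for atoms and is preserved by the Boolean connectives; for $\exists r.C$ the quantifier $\exists y$ binds the second variable, so only $x$ remains free, and within the conjunct $tr_y(C)$ any further existential restriction rebinds $x$ — which is sound precisely because at that point $x$ does not occur free in $tr_y(C)$, so the rebinding is local to that conjunct and cannot collide with the occurrence of $x$ in $tr_{x,y}(r)$. Inverse roles merely exchange the roles of $x$ and $y$, and product roles split into the two single-variable concept translations, so neither escapes the two-variable budget. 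The only construct forcing counting rather than plain $\FO^2$ is the functionality assertion, which I translate as $tr(func(r))=\forall x\,\exists^{\leq 1}y\,tr_{x,y}(r)$; expressing ``at most one $r$-successor'' without a counting quantifier would require a third variable, so this is exactly where the $\ctwo$ target (as opposed to $\FO^2$) is used.

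Correctness then follows by a routine induction establishing, simultaneously, that for every structure $\mm$ and all $e,e'\in M$ we have $e\in C^\mm \iff \mm\models tr_x(C)[x\mapsto e]$, that $(e,e')\in r^\mm \iff \mm\models tr_{x,y}(r)[x\mapsto e,\,y\mapsto e']$, and that $\mm\models\varphi \iff \mm\models tr(\varphi)$ for every $\L$-formula $\varphi$. Each clause matches the semantics of $\L$ by unfolding definitions: the existential-restriction clause mirrors the definition of $(\exists r.C)^\mm$, the inverse and product clauses mirror their set-theoretic definitions, the inclusion clauses mirror $\subseteq$, and the functionality clause mirrors the $func$ semantics through the meaning of $\exists^{\leq 1}$. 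Since the resulting formulae use only $x$, $y$ and the counting quantifiers already present in $\ctwo$, the image lies in $\ctwo(\tau)$.

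I do not expect a genuine obstacle here: the statement is essentially the classical embedding of $\mathcal{ALCHOIF}$-style description logics into $\ctwo$. The one place that demands care is the bookkeeping of free variables in the presence of inverse roles, product roles, and nested existential restrictions; one must confirm that the variable recycling in the $\exists r.C$ clause never shadows a live occurrence of the recycled variable. This is handled cleanly by the single-free-variable invariant above, after which the semantic induction is mechanical.
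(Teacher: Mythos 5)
Your translation is exactly the one the paper uses (its Table of $tr$-clauses, with $z,\bar z$ in place of your explicit $x,y$ swapping), including the componentwise role clauses, the variable-recycling in $\exists r.C$, and the $\exists^{\leq 1}$ clause for $func(r)$; the paper likewise treats the whole thing as the standard DL-to-$\ctwo$ embedding and leaves the semantic induction to the reader. Your proposal is correct and follows essentially the same route, with the minor bonus that you explicitly spell out the nominal clause $tr_x(c)=(x=c)$ and the single-free-variable invariant that justifies the two-variable budget.
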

E.g., $tr(C_1\sqsubseteq C_2) = \forall x\, C_1(x)\to C_2(x)$. The details of $tr$
are given in Table \ref{tab:fo2}.

The translation of $\SLls$ requires more work. Later we need the following
related translations:
$\alpha:\SLls\to\L$ extracts from the $\SLls$ properties whatever can be expressed in $\L$.
$\beta:\SLls\to\ctwotree$ captures $\SLls$ precisely.

Given a structure $\mm$, $L^\mm$ is a {\em singly linked list from $o_{var_1}^\mm$ to $o_{var_2}^\mm$ w.r.t. the field $next^\mm$}
if $\mm$ satisfies the following five conditions, or it is empty. Except for (5), the conditions are expressed fully in $\L$ below:
\\
(1) $o_{var_1}^\mm$ belongs to $L^\mm$;
(2) $o_{var_2}^\mm$ is pointed to by an $L^\mm$ element;
(3) $o_{var_2}^\mm$ does not belong to $L^\mm$;
(4) Every $L^\mm$ element is pointed to from an $L^\mm$ element,
      except possibly for $o_{var_1}^\mm$;
(5) all elements of $L^\mm$ are reachable from $o_{var_1}^\mm$ via $next^\mm$.
Let
\[
\begin{array}{l}
\begin{array}{llllll}
\alpha^1(\ls)&=&(o_{var_1}  \sqsubseteq  L)  & \alpha^3(\ls)&=&(o_{var_2}  \sqsubseteq  \neg L)\\
\alpha^2(\ls)&=&(o_{var_2}  \sqsubseteq  \exists next^{-}.L) \ \ \ \ \ &
\alpha^4(\ls)&=&(L  \sqsubseteq  o_{var_1}\sqcup \exists next^{-}.L)
\end{array}\\
\begin{array}{ll}
\alpha_{emp-ls}(\ls) &= (L\sqsubseteq \bot) \land (o_{var_1}=o_{var_2})\\
\alpha(\ls) & = \alpha^1(\ls) \land \cdots \land \alpha^4(\ls)\lor \alpha_{emp-ls}(\ls)
\end{array}
\end{array}
\]
In memory structures $\mm$ satisfying $\alpha(\ls)$,
if $L^\mm$ is not empty, then it contains a list segment from  $o_{var_1}^\mm$ to $o_{var_2}^\mm$,
but additionally $L^\mm$ may contain additional simple $next^\mm$-cycles,
which are disjoint from the list segment.
Here we use the finiteness of $Alloc^\mm$ (which contains $L^\mm$) and the functionality of $next^\mm$.
A connectivity condition is all that is lacking to express $\ls$ precisely.
$\alpha(\ls)$ can be extended to $\alpha:\SLls\to\L$ in a natural way 
\ifiFM
(see the full version \cite{full-version})
\else
(see Appendix \ref{app:translations}) 
\fi 
such that:
\begin{lemma}\label{lem:sl-to-dl-2-alpha}
 For every $\varphi\in\SLls$, $\varphi$ implies $\alpha(\varphi)$ over memory structures.
\end{lemma}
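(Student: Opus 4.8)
The plan is to prove $\varphi\models_m\alpha(\varphi)$ by a direct case analysis on the flat shape $\varphi=\Pi\bv\Sigma$ with $\Sigma=\beta_1*\cdots*\beta_r$, reading everything off the memory-structure semantics of $\SLls$ and exploiting that $\alpha$ is by construction a sound \emph{over-approximation} of the precise list semantics. The natural extension of $\alpha$ assigns to each spatial conjunct $\beta_i$ a concept name $L_i$ and sets
\[
\alpha(\Pi\bv\Sigma)\;=\;tr(\Pi)\;\land\;\textstyle\bigwedge_{i}\alpha_{\mathrm{loc}}(\beta_i)\;\land\;\bigwedge_{i\neq j}(L_i\sqcap L_j\sqsubseteq\bot)\;\land\;(Alloc\equiv L_1\sqcup\cdots\sqcup L_r),
\]
where $tr(\Pi)$ renders each equality/inequality of $\Pi$ as the corresponding $\L$-(in)equality of singleton concepts $o_E$, and $\alpha_{\mathrm{loc}}(\beta_i)$ is the list translation $\alpha(\ls)$ (with $L:=L_i$) for a list conjunct, and the obvious singleton-plus-field description for a points-to conjunct. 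For $\Sigma=emp$ this degenerates to $Alloc\equiv\bot$.

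I would take an arbitrary memory structure $\mm$ with $\mm\models\varphi$. By the $\SLls$ semantics over memory structures, $\mm$ supplies a partition $Alloc^\mm=P_1^\mm\uplus\cdots\uplus P_r^\mm$ in which each $P_i^\mm$ realises $\beta_i$, and, under the natural extension above, $L_i^\mm=P_i^\mm$. It then suffices to validate each conjunct of $\alpha(\varphi)$ in $\mm$. The pure conjunct $tr(\Pi)$ holds because $\Pi$ is interpreted in the natural way and $tr$ is the literal rendering of its (in)equalities; the disjointness conjuncts $L_i\sqcap L_j\sqsubseteq\bot$ and the coverage conjunct $Alloc\equiv L_1\sqcup\cdots\sqcup L_r$ hold precisely because the $P_i^\mm$ partition $Alloc^\mm$; and for a points-to conjunct the singleton $P_i^\mm=\{o_{var}^\mm\}$ together with $f_j^\mm(o_{var}^\mm)=E_j^\mm$ at once validates its local $\L$ description. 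The one substantive step is the list conjunct: when $P_i^\mm$ is a genuine (possibly empty) list segment from $E_1^\mm$ to $E_2^\mm$, it satisfies $\alpha^1(\ls)$--$\alpha^4(\ls)$, or else $\alpha_{emp-ls}$. This is exactly the check sketched in the text before the lemma — the head $E_1^\mm$ lies in $L_i$, the tail $E_2^\mm$ is pointed to from $L_i$ and lies outside $L_i$, and every non-head cell $v_{j+1}$ has its $next^\mm$-predecessor $v_j$ inside $L_i$ — so no new difficulty arises beyond the single-list case.

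The hard part, and the place to argue carefully, is conceptual rather than computational: the implication is only meaningful once the $\SLls$ semantics over memory structures is read as fixing $L_i^\mm$ to be the $i$-th part of the heap partition (as the running example does with $ELst$ and $PLst$), so that $\alpha(\varphi)$ is evaluated in the same structure rather than in an arbitrary expansion; I would state this identification explicitly as the bridge between the two logics. I must also confirm that every glue condition stays inside $\L$ — it does, since disjointness, coverage, and the (in)equalities of $\Pi$ are all finite Boolean combinations of concept inclusions — and dispatch the edge cases, namely $\Sigma=emp$ forcing $Alloc^\mm=\emptyset$, an empty segment (including one forced by $E_1^\mm=E_2^\mm$ in $\Pi$) handled by the $\alpha_{emp-ls}$ disjunct, and pure equalities that merge variables, which are harmless here since $\alpha$ asserts only consequences of $\varphi$. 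Crucially, because we prove $\varphi\models\alpha(\varphi)$ and not the converse, the slack in $\alpha$ — its failure to enforce the connectivity condition (5), which lets $L_i$ carry extra disjoint $next^\mm$-cycles — is entirely in our favour and requires no further attention.
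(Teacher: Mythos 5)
Your proposal is correct and follows essentially the same route as the paper: the appendix defines $\alpha$ compositionally over the pure part, the points-to conjuncts, the partition/disjointness glue, and the list conjuncts (via $\alpha^1$--$\alpha^4$ and $\alpha_{emp-ls}$), and argues soundness ``by construction'' exactly as your case analysis does, with the same observation that the missing connectivity condition only weakens $\alpha(\varphi)$ and so cannot break the one-directional implication. Your explicit remark that the concepts $L_i$ must be identified with the partition parts $P_i^\mm$ supplied by the $\SLls$ semantics is precisely the convention the paper adopts (it names the concepts $P_i$), so nothing is missing.
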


To rule out the superfluous cycles we turn to $\ctwotree$.
Let
\ifiFM
$\beta^5(\ls) = \forall x\forall y\, \big[(L(x)\land L(y))\to (\Fone(x,y) \leftrightarrow next(x,y))\big]\land 
  \forall x \big[\big(L(x)\land\forall y \,(L(y)\to\neg \Fone(y,x))\big)\to (x\approx o_{var_1})\big]
$.
\else
\[
\begin{array}{l}
\beta^5(\ls) = \forall x\forall y\, \big[(L(x)\land L(y))\to (\Fone(x,y) \leftrightarrow next(x,y))\big]\land \\
 \hphantom{\beta^5(\ls) = } \forall x \big[\big(L(x)\land\forall y \,(L(y)\to\neg \Fone(y,x))\big)\to (x\approx o_{var_1})\big]
\end{array}
\]
\fi
$\beta^5(\ls)$ states that the forest $F_1$ coincides with $next$ inside $L$ and
that the forest induced by $\Fone$ on $L$ is a tree.
Let
$
\beta(\ls)= \exists \Fone\, tr(\alpha(\ls)) \land \beta^5(\ls) \land \varphi_{forest}(F_1)
$.
$\beta(\ls)\in \ctwotree$ and it expresses that $L^\mm$ is a list.
The extension of $\beta(\ls)$ to the translation function $\beta:\SLls\to\ctwotree$ is natural
and discussed in 
\ifiFM 
the full version \cite{full-version}.
\else Appendix \ref{app:translations}.
\fi
\ifiFM 
The full version \cite{full-version} also
\else
Appendix \ref{app:cyclic}
\fi
discusses the translation of {\em cyclic data structures} under $\beta$. 
\begin{lemma}\label{lem:sl-to-dl-2-beta}
 For every $\varphi\in\SLls$:
$\varphi$ and $\beta(\varphi)$ agree on all memory structures.
\end{lemma}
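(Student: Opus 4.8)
The plan is to prove the two directions of the claimed agreement by first isolating the single spatial atom $\ls(var_1,var_2)$ as the decisive case and then lifting to an arbitrary $\SLls$-formula $\varphi=\Pi\bv(\beta_1*\cdots*\beta_r)$ by a bookkeeping argument over the heap partition. The pure part $\Pi$ is a conjunction of (in)equalities of constants and translates verbatim into $\ctwo$, so it has the same truth value on every memory structure and may be carried along unchanged; hence everything reduces to the spatial conjunction. For a points-to atom $var\mapsto[f_1{:}E_1,\dots,f_k{:}E_k]$ the owned heap is the singleton $\{var^\mm\}$ with prescribed field values, a purely local statement already expressible in $\ctwo$. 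Thus the only genuine content lies in the list atom together with the separating conjunction.

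The crux is the single list atom, and here I would argue both inclusions. For the forward direction, suppose $\mm\models\ls(var_1,var_2)$, so the marking predicate $L$ is a genuine acyclic connected list from $o_{var_1}^\mm$ to $o_{var_2}^\mm$. I would pick the witness $\Fone:=next^\mm{\restriction}L^\mm$; this is a simple directed path, hence a forest, it coincides with $next$ on $L$, and has $o_{var_1}^\mm$ as its unique $\Fone$-root in $L$, so $\beta^5(\ls)$ and $\varphi_{forest}(\Fone)$ hold, while $tr(\alpha(\ls))$ holds by the soundness direction (Lemma~\ref{lem:sl-to-dl-2-alpha}) and the faithfulness of $tr$ (Lemma~\ref{lem:dl-to-ct2}); hence $\mm\models\beta(\ls)$. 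For the backward direction, suppose $\mm\models\beta(\ls)$ with witness $\Fone$. By the analysis following $\alpha$, the subformula $tr(\alpha(\ls))$ forces $L^\mm$ to be a list segment from $o_{var_1}^\mm$ to $o_{var_2}^\mm$, possibly enlarged by finitely many pairwise-disjoint simple $next$-cycles (here finiteness of $Alloc^\mm$ and functionality of $next^\mm$ are used). Now $\beta^5(\ls)$ pins $\Fone$ to coincide with $next^\mm$ inside $L^\mm$, and $\varphi_{forest}(\Fone)$ makes $\Fone$ acyclic with at most one incoming edge per node; consequently $next^\mm{\restriction}L^\mm$ is acyclic, eliminating every superfluous cycle and leaving exactly the segment, while the root clause of $\beta^5(\ls)$ makes $o_{var_1}^\mm$ the unique source. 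Combined with $\alpha^4$ (every non-head $L$-element has an $L$-predecessor) this gives each non-head node exactly one predecessor, and acyclicity then forces a single simple path covering $L^\mm$, supplying the missing connectivity condition (5); hence $L^\mm$ is a genuine list and $\mm\models\ls(var_1,var_2)$.

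To lift this to $\Sigma=\beta_1*\cdots*\beta_r$ I would introduce for each conjunct a marking predicate $L_i$ (the singleton for a points-to conjunct being named by its variable) and require in $\ctwo$ that these blocks partition the allocated heap, $Alloc=L_1\uplus\cdots\uplus L_r$, while relativising the conditions $tr(\alpha(\beta_i))$ and $\beta^5(\beta_i)$ of each list conjunct to the corresponding $L_i$. The one subtlety is that $\ctwotree$ offers a single existential forest: I would share one $\Fone$ across all list conjuncts, requiring it to coincide with $next$ inside each $L_i$. This is sound precisely because the $*$-semantics makes the $L_i$ pairwise disjoint, so the union of the individual list-paths is again a forest; this is exactly the observation recorded in the footnote that one forest suffices. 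The forward direction then reads the partition $\{P_i\}$ off an $\SLls$-model and sets $\Fone:=\bigcup_i next^\mm{\restriction}P_i$; the backward direction reads the partition off the marking predicates and the shared forest, certifying each conjunct by the single-atom case.

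The main obstacle is the backward direction of the single-atom case, namely turning the local, two-variable, forest-certified conditions into the genuinely global properties of acyclicity and connectivity that $\SLls$ demands: the delicate point is that $\alpha$ alone still admits spurious $next$-cycles, and one must show that forcing the acyclic forest $\Fone$ to agree with $next$ on $L$ simultaneously kills every such cycle and upgrades the degree constraints to full reachability from $o_{var_1}^\mm$. A secondary obstacle is checking that a single shared forest $\Fone$ can certify all list conjuncts at once without interference, which rests entirely on the disjointness guaranteed by $*$.
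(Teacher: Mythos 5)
Your proposal is correct and follows essentially the same route as the paper: reduce to the list atom, observe that $\alpha$ pins $L$ down to a segment plus possibly spurious disjoint $next$-cycles (using finiteness of $Alloc^\mm$ and functionality of $next^\mm$), use $\beta^5$ together with $\varphi_{forest}(\Fone)$ to kill the cycles and supply connectivity from $o_{var_1}^\mm$, and lift to the full spatial conjunction by sharing one forest across the pairwise-disjoint blocks $P_1,\dots,P_r$. The paper's appendix records exactly these points (including the observation that disjointness of the $P_i$ is what lets a single $\Fone$ certify all list conjuncts simultaneously), so nothing further is needed.
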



$\ctwotree$'s flexibility allows to easily express variations
of singly-linked lists, such as doubly-linked lists, or lists in which
every element points to a special head element via a pointer $head$, and analogue variants of trees.

\subsection{Running Example: Shape Invariants}\label{se:shape-inv-example}
At the loop header of the program $S$ from the introduction,
the memory contains two distinct lists, namely $PLst$ and $ELst$.
$ELst$ is partitioned into two parts: the employees who have been visited in the loop so far,
and those that have not.
This can be expressed in $\SLls$ by the formula:
$\varphi_{\ell_l} = \true \bv \ls(eHd,e) * \ls(e,nil) * \ls (pHd, nil)$.
The translation $\alpha(\varphi_{\ell_l})$ is given by
\ifiFM
$
P_1\sqcup P_2\sqcup P_3 \equiv Alloc
  \hphantom{;} \land \nonumber
  \alpha(\ls(eHd,e,next,P_1)) \land  \nonumber 
  \alpha(\ls(e,\nil,next,P_2))\land \nonumber
  \alpha(\ls(pHd,\nil,next,P_3))\land \nonumber 
   P_1 \sqcap P_2 \equiv \bot  \land P_1 \sqcap P_3 \equiv \bot
   \land  P_2 \sqcap P_3\equiv \bot \land \alpha_{\true} \nonumber
$
\else
\[
\begin{array}{l}
  P_1\sqcup P_2\sqcup P_3 \equiv Alloc
  \hphantom{;} \land \nonumber
  \alpha(\ls(eHd,e,next,P_1)) \land  \nonumber \\
  \alpha(\ls(e,\nil,next,P_2))\land \nonumber
  \alpha(\ls(pHd,\nil,next,P_3))\land \nonumber \\
   P_1 \sqcap P_2 \equiv \bot  \land P_1 \sqcap P_3 \equiv \bot
   \land  P_2 \sqcap P_3\equiv \bot \land \alpha_{\true} \nonumber
\end{array}
\]
\fi
The translation from SL assigns concepts $P_i$ to each of
the lists.
$\alpha_{\true}$ which occurs in $\alpha(\varphi_{\ell_l})$ is the translation of $\Pi=\true$
in $\varphi_{\ell_l}$.
In order to clarify the meaning of $\alpha(\varphi_{\ell_l})$ we relate the $P_i$
to the concept names from Section \ref{se:EPM-example-DL} and simplify the formula somewhat.
Let $\psi_l = P_1\sqcup P_2 \equiv ELst \land P_3\equiv PLst$.
$P_1$ contains the
elements of $ELst$ visited in the loop so far.
$\alpha(\varphi_{\ell_l})$ is equivalent to:
\ifiFM
$
\alpha'(\varphi_{\ell_l}) =    \psi_l \land
   ELst \sqcup PLst \equiv Alloc \land
  ELst \sqcap PLst \equiv \bot \land  \alpha(\ls(eHd,e,next,P_1)) 
 \land\alpha(\ls(e,\nil,next,ELst \sqcap \neg P_1))\land
 \alpha(\ls(pHd,\nil,next,PLst))$. We have 
$\beta^5(\Sigma) = \beta^5(\ls(eHd,e,next,P_1))\land
\beta^5(\ls(e,\nil,next,ELst \sqcap \neg P_1)) \land
 \beta^5(\ls(pHd,\nil,next,PLst))$
and
$
\beta(\varphi_{\ell_l}) =\exists \Fone\, tr(\alpha(\varphi) \land \beta^5(\Sigma) \land \varphi_{forest}(F_1)
$.
\else
\[
\begin{array}{llll}
\alpha'(\varphi_{\ell_l}) &=&    \psi_l \land
   ELst \sqcup PLst \equiv Alloc \land
  ELst \sqcap PLst \equiv \bot \land  \alpha(\ls(eHd,e,next,P_1)) \\ &
& \land\alpha(\ls(e,\nil,next,ELst \sqcap \neg P_1))\land
 \alpha(\ls(pHd,\nil,next,PLst))\\
\beta^5(\Sigma) & =& \beta^5(\ls(eHd,e,next,P_1))\land
\beta^5(\ls(e,\nil,next,ELst \sqcap \neg P_1)) \land
\\
&& \beta^5(\ls(pHd,\nil,next,PLst))
\\
\beta(\varphi_{\ell_l}) &=&\exists \Fone\, tr(\alpha(\varphi) \land \beta^5(\Sigma) \land \varphi_{forest}(F_1)
\end{array}
\]
\fi

\section{\DomainContent~Analysis} \label{se:content-analysis}

\subsection{Syntax and Semantics of the Programming Language }

\subsubsection{Loopless Programs}
are generated by the following syntax:
\begin{center}
$
\begin{array}{lll}
e & :: & var.f\mid var\mid\nil\,\,\,\,\,\,\,\,\,\,\,\,\,\,\,\,\,\,\,\,\,\,(f\in\tauField,\, o_{var}\in\tauVar)\\
b & :: &(e_{1}=e_{2})\mid\,\sim b\mid(b_{1}\, and\, b_{2})\mid(b_{1}\, or\, b_{2}) \mid \true \mid \false\\
S & :: & var_{1}:=e_{2}\mid var_{1}.f:=e_{2}\mid skip\mid S_{1};S_{2}
\mid var:=new \mid dispose(var)  \mid \\
 & & if\,\, b\,\, then\,\, S_{1}\,\, fi
  \mid if\,\, b\,\, then\,\, S_{1}\,\, else\,\, S_{2}\,\, fi \mid assume(b)
\end{array}
$
\end{center}
 Let $Exp$
denote the set of expressions $e$ and $Bool$ denote the set of Boolean
expressions $b$.
To define the semantics of pointer and Boolean expressions, we extend  $f^{\mm}$
by $f^{\mm}(\err)=\err$ for every $f\in \tauField$. We define
$\mathcal{E}_e(\mm):Exp\,\,\to \Addr^\mm\cup\{\nil,\err\}$ and $\mathcal{B}_b(\mm):Bool\to\{o_\true,o_\false,\err\}$
(with $\err\not\in M$):
\begin{center}
$
\begin{array}{llll}
\mathcal{E}_{var}(\mm) & =o_{var}^{\mm},  \mbox{ if }o_{var}^{\mm}\in Alloc^{\mm}\ &
\mathcal{B}_{e_{1}=e_{2}}(\mm) & =\err  \mbox{ if }\mathcal{E}_{e_{i}}(\mm)=\err, i\in\{1,2\}\\
\mathcal{E}_{var}(\mm) & =\err,  \mbox{ if }o_{var}^{\mm}\not\in Alloc^{\mm}&
\mathcal{B}_{e_{1}=e_{2}}(\mm) & =o_\true,  \mbox{ if }\mathcal{E}_{e_{1}}(\mm)=\mathcal{E}_{e_{2}}(\mm)\\
\mathcal{E}_{ar.f}(\mm) &=f^{\mm}(\mathcal{E}_{var}(\mm))&
\mathcal{B}_{e_{1}=e_{2}}(\mm) & =o_\false,  \mbox{ if }\mathcal{E}_{e_{1}}(\mm)\not=\mathcal{E}_{e_{2}}(\mm)
\end{array}
$\end{center}
$\mathcal{B}$ extends naturally w.r.t. the Boolean connectives.

The operational semantics of the programming language is:
For any command $S$, if $\mathcal{E}$ or $\mathcal{B}$
give the value $\err$, then $\left\langle S,\mm\right\rangle \leadsto\abo$.
Otherwise, the semantics is as listed below.
First we assume that
in the memory structures involved
all relation symbols either belong to $\tauField$, are ghost symbols
or are the required symbols of memory structures ($Alloc$, $Aux$, etc.).
\begin{enumerate}
\item $\left\langle skip,\mm\right\rangle \leadsto\mm$.
\item 
$\left\langle var_{1}:=e_{2},\mm\right\rangle \leadsto[\mm\mid o_{var_{1}}^{\mm}\mbox{ is set to }\mathcal{E}_{e_{2}}(\mm)]$.
\item $\left\langle var:=new,
\mm\right\rangle\leadsto$
$[\mm\mid \mbox{For some }t\in\MemPool^\mm,$ \\$ t\mbox{ is moved to }Alloc^{\mm}\mbox{ and }o_{var}^{\mm}\mbox{ is set to }t],$
\item If $o_{var}^{\mm}\not\in Alloc^{\mm}$, $\left\langle dispose(var),\mm\right\rangle \leadsto\abo$;\\
otherwise $\left\langle dispose(var),\mm\right\rangle \leadsto[\mm\mid o_{var}^{\mm}\mbox{ is removed from }Alloc^{\mm}]$.
\item $\left\langle S_{1};S_{2},\mm\right\rangle \leadsto\left\langle S_{2},\left\langle S_{1},\mm\right\rangle \right\rangle$
\item
$\left\langle if\,\, b\,\, then\,\, S_{\true}\,\, else\,\, S_{\false},\mm\right\rangle \leadsto\left\langle S_{tv},\mm\right\rangle $
where $tv=\mathcal{B}_{b}(\mm)$.
\item
$\left\langle if\,\, b\,\, then\,\, S\,\, ,\mm\right\rangle \leadsto
\left\langle if\,\, b\,\, then\,\, S\,\, else\,\, skip\,\, fi ,\mm\right\rangle$.
\item
If $\mathcal{B}_{b}(\mm)=\true$, then
$\left\langle assume(b), \mm\right\rangle \leadsto \mm$;\\
otherwise $\left\langle assume(b), \mm\right\rangle \leadsto \abo$.
\end{enumerate}
If $\mm$ is
a memory structure and $\left\langle S,\mm\right\rangle \leadsto\mm'$,
then $\mm'$ is a memory structure.

Now consider a relation symbol e.g. $ELst$.
If $\left\langle S,\mm\right\rangle \leadsto\mm'$,
then we want to think of $ELst^{\mm}$ and $ELst^{\mm'}$ as the employee list
before and after the execution of $S$. However, the constraints that
$ELst^{\mm}$ and $ELst^{\mm'}$ are lists and that
$ELst^{\mm'}$ is indeed obtained from from $ELst^{\mm}$ by running $S$
will be expressed as formulae.
In the $\leadsto$ relation, we allow any values for $ELst^\mm$ and $ELst^{\mm'}$.

For any tuple $\bar{R}$ of relation symbols which do not belong to $\tauField$,
are not ghost symbols
and are not the required symbols of memory structures ($Alloc$, $Aux$, etc.),
we extend $\leadsto$ as follows: if $\left\langle S,\mm\right\rangle \leadsto\mm'$,
then $\left\langle S,\left\langle \mm,\bar{R}^{\mm}\right\rangle \right\rangle \leadsto\left\langle\mm',\bar{R}^{\mm'}\right\rangle$,
for any tuples $\bar{R}^{\mm}$ and $\bar{R}^{\mm'}$.

\subsubsection{Programs with Loops} are represented as hybrids of the programming language for loopless code
and control flow graphs.
 \begin{definition}[Program]\label{def:program}
 A {\em program} is $G = \left\langle V, E,\linit, shp, \dmnCnt,\lambda\right\rangle$
 such that
 $G=(V,E)$ is a directed graph with no multiple edge but possibly containing self-loops, $\linit \in V$ has in-degree $0$,
  $shp:V\to \SLls$, $\dmnCnt:V\to \L(\tau)$ are functions, and $\lambda$ is a function from $E$ to the set of loopless programs.
 \end{definition}

\begin{tabular}{ll}
Here is the code $S$ from the introduction:   &
\multirow{5}{*}{\ \ \ \ \ \ \ \ \ \ \ \ \ \ \ \ \ \ \ \ \ \ \ \ \ \ \  \includegraphics[scale=0.80]{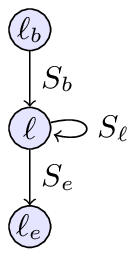}}\\
\end{tabular}\\
\begin{tabular}{ll}
$V=\{\ell_{b},\ell_l,\ell_{e}\}$ & $E=\{(\ell_{b},\ell_l),(\ell_l,\ell_l),(\ell_l,\ell_{e})\}$ \\
$\lambda(\ell_{b},\ell_l)=S_{b}$ & $\linit=\ell_{b}$ \\
 $\lambda(\ell_l,\ell_l)=assume(\sim(e\,=\,\nil));S_{\ell_l}$\\
$\lambda(\ell_l,\ell_e)=assume(e\,=\,\nil);S_e$
\end{tabular}
\\ \ \\
$S_{b}$, $S_{\ell_l}$ and $S_{e}$ denote
 the three loopless code blocks which are respectively the code block before the loop, inside the loop and after the loop.
 The annotations $shp$ and $\dmnCnt$ are described in Section \ref{se:general-meth-example}.

The semantics of programs derive from the semantics of loopless programs and is given in terms of program paths.
Given a program $G$, a {\em path in $G$} is a finite sequence of directed edges
$e_1,\ldots,e_t$ such that for all $1\leq i\leq t-1$, the tail of $e_i$
is the head of $e_{i+1}$.
A path may contain cycles.
 \begin{definition}[$\leadsto^*$ for paths]
 Given a program $G$, a path $P$ in $G$,
 and memory structures $\mm_1$ and $\mm_2$
 we define whether
 $\left\langle P, \mm_1\right\rangle\leadsto^* \mm_2$
 holds inductively.
\ifiFM 
\else
 \begin{itemize}
 \item 
\fi
If $P$ is empty, then $\left\langle P, \mm_1\right\rangle\leadsto^* \mm_2$ iff $\mm_1=\mm_2$.
\ifiFM
\else
 \item 
\fi
If $e_t$ is the last edge of $P$, then $\left\langle P, \mm_1\right\rangle\leadsto^* \mm_2$ iff
 there is $\mm_3$ such that
 $\left\langle P\backslash\{e_t\}, \mm_1 \right\rangle\leadsto^* \mm_3$
 and $\left\langle \lambda(e_t), \mm_1 \right\rangle\leadsto^* \mm_3$.
  $P\backslash\{e_t\}$ denotes the path obtained from $P$ by removing the last edge $e_t$.
\ifiFM
\else
 \end{itemize}
\fi
 \end{definition}

\subsection{Hoare-style Proof System}

Now we are ready to state our two-step verification methodology that we formulated in Section \ref{se:intro} precisely.
Our methodology assumes a program $P$ as in Definition~\ref{def:program} as input (ignoring the $shp$ and $\dmnCnt$ functions for the moment).

\textbf{I. Shape Analysis.}
The user annotates the program locations with SL formulae from $\SLls$ (stored in the $shp$ function of $P$).
Then the user proves the validity of the $\SLls$ annotations, for example, by using techniques from~\cite{pr:BCO05}.

\textbf{II. Content Analysis.}
The user annotates the program locations with $\L$-formulae that she wants to verify (stored in the $\dmnCnt$ function of $P$).
We point out that an annotation $\dmnCnt(\ell)$ can use the concepts occurring in $\alpha(shp(\ell))$ (recall that $\alpha: \SLls \rightarrow \L$ maps SL formulae to $\L$-formulae).

In the rest of the paper we discuss how to verify the $\dmnCnt$ annotations.
In Section~\ref{se:methodology} we describe how to derive a verification condition for every program edge.
The verification conditions rely on the backwards propagation function $\Theta$ for $\L$-formulae which we introduce in Section~\ref{se:backwards-example}.
The key point of our methodology is that the validity of the verification conditions can be discharged automatically by a satisfiability solver for $\ctwotree$-formulae.
We show that all the verification conditions are valid if and only if $\dmnCnt$ is inductive.
Intuitively, $\dmnCnt$ being inductive ensures that the annotations $\dmnCnt$ can be used in an inductive proof to show that all reachable memory structures indeed satisfy the annotations $\dmnCnt(\ell)$ at every program location $\ell$ (see Definition~\ref{def:inductive-annotation} below).

%
%
%


\subsection{Content Verification}\label{se:methodology}

 We want to prove that, for every initial memory structure $\mm_1$
 from which the computation satisfies $shp$ and which satisfies the \domainContent~pre-condition $\dmnCnt(\linit)$,
 the computation satisfies $\dmnCnt$. Here are the corresponding verification conditions, which annotate the vertices of $G$:

\begin{definition}[Verification conditions]
 Given a program $G$, $VC$ is the function from $E$ to $\L$
 given for $e=(\ell_0,\ell)$ by
\ifiFM
 $
    VC(e) = \neg \big[ \beta(shp(\ell_0))\land tr(\dmnCnt(\ell_0))  \land tr\left( \Theta_{\lambda(e)}\big(\alpha(shp(\ell)) \land \neg  \dmnCnt(\ell)\big)\right)\big]
$
\else
\begin{eqnarray*}
    VC(e) &=& \neg \big[ \beta(shp(\ell_0))\land tr(\dmnCnt(\ell_0))  \land tr\left( \Theta_{\lambda(e)}\big(\alpha(shp(\ell)) \land \neg  \dmnCnt(\ell)\big)\right)\big]
 \end{eqnarray*}
\fi
 $VC(e)$ {\em holds} if $VC(e)$ is a tautology over memory structures ($\top \models_m VC(e)$).
 \end{definition}
 $\Theta$ is discussed in Section \ref{se:backwards--}. As we will see, $VC(\ell_0,\ell)$ expresses that
 when running the loopless program $\lambda(e)$ when the memory satisfies the the annotations of $\ell_0$,
 and when the shape annotation of $\ell$ is at least partly true (i.e., when $\alpha(shp(\ell))$),
 the content annotation of $\ell$ holds.

 Let $J$ be a set of  memory structures.
 For a formula in $\ctwotree$ or $\L$, we write $J\models\varphi$
 if, for every $\mm\in J$, $\mm\models \varphi$.
 Let $Init$ be a set of memory structures.
 \begin{definition}[Inductive program annotation]
 \label{def:inductive-annotation}
 Let $f:V\to \ctwotree$.
 We say $f$ is {\em inductive} for $Init$ if (i) $Init\models f(\ell_{init})$, and (ii)
 for every edge $e=(\ell_1,\ell_2)\in E$ and memory structures
 $\mm_1$ and $\mm_2$ such that $\mm_1\models f(\ell_1)$ and $\left\langle \lambda(e),\mm_1\right\rangle \leadsto \mm_2$,
 we have $\mm_2\models f(\ell_2)$.
 We say $shp$ is inductive for $Init$ if the composition $shp\circ\beta:V\to\ctwotree$ is inductive for $Init$.
 We say {\em $\dmnCnt$ is inductive for $Init$ relative to $shp$}
 if $shp$ is inductive for $Init$ and
 $g:V\to\ctwotree$ is inductive for $Init$,
 where $g(\ell)=tr(\dmnCnt(\ell))\land \beta(shp(\ell))$.
 \end{definition}

 \vspace{4pt}
 \begin{theorem}[Soundness and Completeness of the Verification Conditions] \label{th:soundness-completeness}
 Let $G$ be a program such that $shp$ is inductive for $Init$ and $Init\models cnt(\ell_{init})$.
The following statements are equivalent:\\
 (i)  For all $e\in E$, $VC(e)$ holds.\ \ \ \ \ (ii)  $\dmnCnt$ is inductive for $Init$ relative to $shp$.
 \end{theorem}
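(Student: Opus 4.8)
The plan is to prove the equivalence (i) $\Leftrightarrow$ (ii) by reducing both statements to a common assertion about the backward-propagation function $\Theta$, and then discharging the two directions by standard logical manipulation. The crucial technical fact I would isolate first is the correctness of $\Theta_{\lambda(e)}$: for every loopless program $\lambda(e)$, every memory structure $\mm_1$, and every $\L$-formula $\psi$, we have $\mm_1 \models \Theta_{\lambda(e)}(\psi)$ if and only if $\langle \lambda(e),\mm_1\rangle \leadsto \mm_2$ implies $\mm_2 \models \psi$ (this is the backward-translation promised by Lemma~\ref{lem:backwards--}). I would state this as the workhorse and treat it as given, since $\Theta$ is defined in Section~\ref{se:backwards--}.

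\textbf{Setup.} Since $shp$ is assumed inductive for $Init$, unfolding Definition~\ref{def:inductive-annotation} shows that statement (ii) reduces to: the map $g(\ell)=tr(\dmnCnt(\ell))\land \beta(shp(\ell))$ is inductive for $Init$. The base condition $Init\models g(\ell_{init})$ holds automatically: by hypothesis $Init \models cnt(\ell_{init})$ (hence $Init\models tr(\dmnCnt(\ell_{init}))$ by Lemma~\ref{lem:dl-to-ct2}), and $Init\models \beta(shp(\ell_{init}))$ follows from $shp$ being inductive together with Lemma~\ref{lem:sl-to-dl-2-beta}. So both (i) and (ii) turn out to be purely about the \emph{step} condition on each edge, and I would reduce the whole theorem to proving, for each fixed edge $e=(\ell_0,\ell)$, the local equivalence
\[
VC(e)\text{ holds} \quad\Longleftrightarrow\quad \big[\,\mm_1\models g(\ell_0)\text{ and }\langle\lambda(e),\mm_1\rangle\leadsto\mm_2\ \Rightarrow\ \mm_2\models g(\ell)\,\big].
\]

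\textbf{The local equivalence.} Fix $e=(\ell_0,\ell)$. For the right-hand side, I first observe that whenever $\mm_1\models \beta(shp(\ell_0))$ and $\langle\lambda(e),\mm_1\rangle\leadsto\mm_2$, the inductiveness of $shp$ already guarantees $\mm_2\models \beta(shp(\ell))$; hence the conclusion $\mm_2\models g(\ell)=tr(\dmnCnt(\ell))\land\beta(shp(\ell))$ collapses to just $\mm_2\models tr(\dmnCnt(\ell))$, equivalently $\mm_2\models\dmnCnt(\ell)$ via Lemma~\ref{lem:dl-to-ct2}. Moreover, since $\varphi$ implies $\alpha(\varphi)$ over memory structures (Lemma~\ref{lem:sl-to-dl-2-alpha}), $\mm_2\models shp(\ell)$ gives $\mm_2\models \alpha(shp(\ell))$, so the step condition is equivalent to: whenever $\mm_1\models \beta(shp(\ell_0))\land tr(\dmnCnt(\ell_0))$ and $\langle\lambda(e),\mm_1\rangle\leadsto\mm_2$, then $\mm_2\models \alpha(shp(\ell))\to\dmnCnt(\ell)$, i.e. $\mm_2\not\models \alpha(shp(\ell))\land\neg\dmnCnt(\ell)$. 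Now I apply the $\Theta$-correctness fact with $\psi=\alpha(shp(\ell))\land\neg\dmnCnt(\ell)$: the step condition holds exactly when no memory structure $\mm_1$ satisfies $\beta(shp(\ell_0))\land tr(\dmnCnt(\ell_0))\land tr(\Theta_{\lambda(e)}(\psi))$, which is precisely $\top\models_m VC(e)$, i.e. $VC(e)$ holds. This is the heart of the argument and reading the two directions off the biconditional completes it.

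\textbf{Main obstacle.} The delicate point, and the step I expect to require the most care, is the interplay between the two translations $\alpha$ and $\beta$ of $\SLls$: the verification condition uses $\beta(shp(\ell_0))$ on the precondition side (the \emph{precise} capture of the list shape, needed so the hypothesis is strong enough for completeness) but only $\alpha(shp(\ell))$ inside the scope of $\Theta$ on the postcondition side (the $\L$-expressible part, since $\Theta$ operates on $\L$-formulae). Justifying that weakening to $\alpha(shp(\ell))$ loses nothing relies essentially on $shp$ being inductive, which forces $\mm_2$ to satisfy the full $shp(\ell)$ — not merely $\alpha(shp(\ell))$ — along any genuine execution; I must make sure this is invoked correctly so that the ``only if'' (completeness) direction goes through. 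A secondary subtlety is that $\Theta$ is defined on $\L$-formulae while $VC(e)$ is checked over memory structures via $\models_m$; I would confirm that the semantics of $\Theta$, the memory-structure conventions for ghost and auxiliary symbols, and Lemma~\ref{lem:dl-to-ct2} all align so that the satisfiability check is faithful.
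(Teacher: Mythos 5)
Your overall architecture coincides with the paper's: both directions are discharged edge by edge, the base condition is observed to hold automatically, the conclusion $g(\ell)$ collapses to $tr(\dmnCnt(\ell))$ because inductiveness of $shp$ already yields $\mm_2\models\beta(shp(\ell))$, and the passage from $shp(\ell)$ to $\alpha(shp(\ell))$ is justified exactly as in the paper via Lemma~\ref{lem:sl-to-dl-2-alpha}. The problem is the ``workhorse'' you treat as given. As you state it --- $\mm_1\models\Theta_{\lambda(e)}(\psi)$ iff every $\mm_2$ with $\langle\lambda(e),\mm_1\rangle\leadsto\mm_2$ satisfies $\psi$ --- it is not what Lemma~\ref{lem:backwards--} provides, and the local biconditional you build on it fails. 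First, $\Theta_{\lambda(e)}(\psi)$ is not evaluated over $\mm_1$ alone: it mentions the fresh symbols $R^{ext}$ (copies of the successor's interpretations of the non-field relations) and, in the precise version, label constants $o_y$ that resolve the nondeterminism of $new$ and of field dereferences. The lemma is a per-successor statement: for the particular $\mm_2$ reached with witnesses $\bar{d}$ and with $(R^{ext})^{\mm_1}=R^{\mm_2}$, one has $\mm_2\models\psi$ iff $\langle\mm_1,(\bar{R}^{ext})^{\mm_1},\bar{d}\rangle\models\Theta_{\lambda(e)}(\psi)$. The universal quantification over successors that you need is not built into $\Theta$; it comes from $\models_m$ ranging over all memory structures, hence over all interpretations of the $R^{ext}$ and the $o_y$. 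Your proof has to make this decomposition and recomposition of the extended structure explicit; it is exactly the ``w.l.o.g.\ $R^{\mm_2}=(R^{ext})^{\mm_1}$'' step in the paper's completeness direction.

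Second, and more seriously, your workhorse gets the abort case backwards. Under your reading, if $\langle\lambda(e),\mm_1\rangle\leadsto\abo$ then $\Theta_{\lambda(e)}(\psi)$ is vacuously true, so $VC(e)$ would additionally assert that no state satisfying the annotations of $\ell_0$ can abort --- a condition that Definition~\ref{def:inductive-annotation} does not impose, so the direction (ii)~$\Rightarrow$~(i) would fail whenever an annotated state has an aborting run. Moreover, with aborts possible, your claimed equivalence between ``the step condition holds'' and ``no structure satisfies the precondition together with $\Theta(\psi)$'' is not an equivalence at all, since both $\forall\mm_2\,(\ldots\Rightarrow\mm_2\models\psi)$ and the step condition hold vacuously on an aborting state. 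Lemma~\ref{lem:backwards--}(2) says the opposite of your reading: $\Theta_{\lambda(e)}(\psi)$ is \emph{false} on aborting runs, which is precisely what makes $VC(e)$ require nothing of them, and which the paper invokes in the completeness direction to extract, from a countermodel of $VC(e)$, an actual non-aborting successor $\mm_2$ witnessing the failure of inductiveness. With the workhorse corrected on these two points, your local equivalence becomes the paper's argument.
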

\ifiFM \else
 We make the notion of a computation satisfying the verification conditions precise using the following definition:
\fi
 \begin{definition}[$Reach(\ell)$]
 Given a program $G$, a node $\ell\in V$, and a set $\Init$ of memory structures, 
 $Reach(\ell)$ is the set of  memory structures $\mm$ for which there is
 $\mm_{init} \in \Init$ and a path $P$ in $G$ starting at $\linit$
 such that $\left\langle P,\mm_{init}\right\rangle\leadsto^* \mm$.
 \end{definition}
 In particular, $Reach(\ell_{init})=Init$.
 The proof of Theorem \ref{th:soundness-completeness} and its consequence Theorem \ref{th:soundness-2} below
 are given in 
\ifiFM
the full version \cite{full-version}.
\else in Appendix \ref{app:soundness-completeness}.
\fi
 \begin{theorem}[Soundness of the Verification Methodology] \label{th:soundness-2}
 Let $G$ be a program such that $shp$ is inductive for $Init$ and $Init\models cnt(\ell_{init})$.
 If for all $e\in E$,
 $VC(e)$ holds, then for $\ell\in V$, $Reach(\ell)\models \dmnCnt(\ell)$.
 \end{theorem}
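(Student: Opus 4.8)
The plan is to derive Theorem~\ref{th:soundness-2} from Theorem~\ref{th:soundness-completeness} together with a routine induction transferring the single-step inductiveness condition to reachability along entire paths. Under the hypotheses that $shp$ is inductive for $Init$, that $Init \models \dmnCnt(\ell_{init})$, and that $VC(e)$ holds for every $e \in E$, Theorem~\ref{th:soundness-completeness} yields directly that $\dmnCnt$ is inductive for $Init$ relative to $shp$. By Definition~\ref{def:inductive-annotation} this means precisely that the function $g: V \to \ctwotree$ given by $g(\ell) = tr(\dmnCnt(\ell)) \land \beta(shp(\ell))$ is inductive for $Init$; this is the single fact I would carry forward.

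The core step is to show that inductiveness of any $g$ propagates to all reachable structures, i.e.\ that $Reach(\ell) \models g(\ell)$ for every $\ell \in V$. I would prove this by induction on the length of the path witnessing reachability. Fix $\mm \in Reach(\ell)$, so there exist $\mm_{init} \in Init$ and a path $P$ in $G$ starting at $\linit$ and ending at $\ell$ with $\langle P, \mm_{init} \rangle \leadsto^* \mm$. In the base case $P$ is empty, whence $\ell = \linit$ and $\mm = \mm_{init} \in Init$, so $\mm \models g(\linit)$ by clause~(i) of Definition~\ref{def:inductive-annotation}. In the inductive step write $P = P' \cdot e_t$ where $e_t = (\ell', \ell)$ is the last edge; by the recursive clause of the definition of $\leadsto^*$ for paths there is an intermediate $\mm_3$ with $\langle P', \mm_{init} \rangle \leadsto^* \mm_3$ and $\langle \lambda(e_t), \mm_3 \rangle \leadsto \mm$. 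The induction hypothesis applied to the strictly shorter path $P'$ gives $\mm_3 \models g(\ell')$, and then clause~(ii) of Definition~\ref{def:inductive-annotation}, instantiated with the edge $e_t$ and the structures $\mm_3, \mm$, yields $\mm \models g(\ell)$.

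It then remains only to project $g$ onto its content component. Since $g(\ell) = tr(\dmnCnt(\ell)) \land \beta(shp(\ell))$, the established fact $Reach(\ell) \models g(\ell)$ entails $Reach(\ell) \models tr(\dmnCnt(\ell))$. By Lemma~\ref{lem:dl-to-ct2}, $\dmnCnt(\ell)$ and its translation $tr(\dmnCnt(\ell))$ agree on the truth value of every $\tau$-structure, and each memory structure is in particular a $\tau$-structure; therefore $Reach(\ell) \models \dmnCnt(\ell)$, which is exactly the claim.

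I expect the main obstacle to be bookkeeping rather than mathematical: inductiveness in Definition~\ref{def:inductive-annotation} is stated via the single-step relation $\leadsto$ on loopless code, while $Reach$ is defined through the path relation $\leadsto^*$, so the entire argument turns on cleanly decomposing a path into its prefix and last edge and matching this decomposition to the recursion defining $\leadsto^*$. No genuinely difficult content survives at this stage, since the hard work---the backward-propagation function $\Theta$ and the reduction of inductiveness to $\ctwotree$-satisfiability---has already been discharged inside Theorem~\ref{th:soundness-completeness}.
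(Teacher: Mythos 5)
Your proposal is correct and follows essentially the same route as the paper: invoke Theorem~\ref{th:soundness-completeness} to obtain inductiveness of $\dmnCnt$ relative to $shp$, then induct on the length of the witnessing path, decomposing it into a prefix and last edge and applying the single-step clause of Definition~\ref{def:inductive-annotation}; carrying $g(\ell)=tr(\dmnCnt(\ell))\land\beta(shp(\ell))$ through the induction is just the paper's strengthened induction hypothesis (content plus shape) phrased as one formula. The only cosmetic difference is that you make explicit the final projection from $tr(\dmnCnt(\ell))$ back to $\dmnCnt(\ell)$ via Lemma~\ref{lem:dl-to-ct2}, which the paper leaves implicit.
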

\subsection{Running Example: General Methodology}  \label{se:general-meth-example}
To verify the correctness of the code $S$, the $shp$ and $\dmnCnt$ annotations must be provided.
The shape annotations of program $S$ are:
\ifiFM
$shp(\ell_b)=\ls(eHd,\nil)*\ls(pHd,\nil)$, $shp(\ell_e)=(proj=pHd) \bv \ls(eHd,\nil)*\ls(pHd,\nil)$, and $shp(\ell_l)= \varphi_{\ell_l}$,
where 
$\varphi_{\ell_l}$ is from Section \ref{se:shape-inv-example}.
\else
\[
\begin{array}{lll}
shp(\ell_b)&=&\ls(eHd,\nil)*\ls(pHd,\nil)\\
shp(\ell_l)&=& \varphi_{\ell_l}\\
shp(\ell_e)&=&(proj=pHd) \bv
 \ls(eHd,\nil)*\ls(pHd,\nil)
\end{array}
\]
$\varphi_{\ell_l}=\ls(eHd,\nil)*\ls(e,\nil)*\ls(pHd,e)$ was considered in Section \ref{se:shape-inv-example}.
\fi

The three \domainContent~annotations require that the system invariants
$\varphi_{invariants}$ from Section \ref{se:expressing-content-invariants}
hold. The post-condition additionally requires that $\varphi_{p-assgn}$ and $\varphi_{lists-updts}$ hold. Recall $\varphi_{p-assgn}$
states that every employee which was not assigned a project, is assigned to $o_{proj}$.
$\varphi_{lists-updts}$ states that the content of the two lists remain unchanged, except that the project $o_{proj}$
is inserted to $PLst$.

In order to interact with the translations $\alpha(shp(\cdots))$ of the shape annotations,
we  need to related the $P_i$ to the concepts $ELst$ and $PLst$.
In Section  \ref{se:shape-inv-example} we defined $\psi_l$, which relates the $P_i$ generated by $\alpha$
on $shp(\ell_l)$.
\ifiFM
We have 
$\psi_{\ell_b}= \psi_{\ell_e}= P_1 \equiv ELst \land P_2 \equiv PLst$.
Then 
$\dmnCnt(\ell_b)=\psi_{\ell_b}\land\varphi_{invariants}$, 
$\dmnCnt(\ell_l)=\psi_{\ell_l}\land\varphi_{invariants}\land \varphi_{lists-updt}\land \varphi_{p-as-\ell_l}$, and 
$\dmnCnt(\ell_e)=\psi_{\ell_e}\land\varphi_{invariants}\land \varphi_{lists-updt}\land \varphi_{p-assgn}$, where
$\varphi_{p-as-\ell_l} = P_1\sqcap\exists wrkFor_{\gho}.o_{\nil} \equiv P_1\sqcap\exists wrkFor.o_{proj}$. 
\else
\begin{eqnarray*}
\psi_{\ell_b}= \psi_{\ell_e}&=&P_1 \equiv ELst \land P_2 \equiv PLst\\
\dmnCnt(\ell_b)&=&\psi_{\ell_b}\land\varphi_{invariants}\\
\dmnCnt(\ell_l)&=&\psi_{\ell_l}\land\varphi_{invariants}\land \varphi_{lists-updt}\land \varphi_{p-as-\ell_l}\\
\dmnCnt(\ell_e)&=&\psi_{\ell_e}\land\varphi_{invariants}\land \varphi_{lists-updt}\land \varphi_{p-assgn}\\
 \varphi_{p-as-\ell_l} &=& P_1\sqcap\exists wrkFor_{\gho}.o_{\nil} \equiv
 P_1\sqcap\exists wrkFor.o_{proj}
 \end{eqnarray*}
\fi
$\varphi_{p-as-\ell_l}$ states that, in the part of $ELst$ containing the employees visited so far in the loop,
any employee which was not assigned to a project at the start of the program (i.e., in the ghost version of $wrkFor$)
is assigned to the project $proj$.
$\varphi_{p-as-\ell_l}$ makes no demands on elements of $ELst$ which have not been reach in the loop so far.
The verification conditions of $G$ are, for each $(\ell_1,\ell_2)\in E$,
\ifiFM
$ VC(\ell_1,\ell_2) = \neg \big[\beta(shp(\ell_1))\land tr(\dmnCnt(\ell_1)) \land
    tr\big(\Theta_{\lambda(l_1,l_2)}(\alpha(shp(\ell_2))\land \neg  \dmnCnt(\ell_2))\big)\big]$.
\else
\[
\begin{array}{lll}
    VC(\ell_1,\ell_2)& =& \neg \big[\beta(shp(\ell_1))\land tr(\dmnCnt(\ell_1)) \land
    tr\big(\Theta_{\lambda(l_1,l_2)}(\alpha(shp(\ell_2))\land \neg  \dmnCnt(\ell_2))\big)\big]\\
\end{array}
\]
\fi
The verification conditions $VC(e)$ express
that the loopless programs on the edges $e$ of $G$ satisfy their annotations.
To prove the correctness of $G$ w.r.t. $VC(e)$ using Theorem \ref{th:soundness-2},
we prove that $VC(e)$, $e\in E$, hold, in order 
\ifiFM
to get as a conclusion that
$Reach(\ell)\models \dmnCnt(\ell)$, for all $\ell\in V$.
\else
to get:
\begin{conclusion}
$Reach(\ell)\models \dmnCnt(\ell)$, for all $\ell\in V$.
\end{conclusion}
\fi

\subsection{Backwards Propagation and the Running Example}\label{se:backwards--}\label{se:backwards-example}
Here we shortly discuss the backwards propagation of a formula along a loopless program $S$.
Let $\left\langle S, \mm_1\right\rangle\leadsto \mm_2$ where $\mm_1$ and $\mm_2$
are memory structures over the same vocabulary $\tau$.
E.g., in our running example, for $i=1,2$, $\mm_i$ is
\ifiFM
$\big\langle M,$ $ ELst^{\mm_i}, next^{\mm_i}, mngBy^{\mm_i}, \cdots,
 ELst_{\gho}^{\mm_i}, next_{\gho}^{\mm_i},  \cdots, Alloc^{\mm_i}, Aux^{\mm_i} \cdots  \big\rangle$. \linebreak
\else
\[\begin{array}{l}
\big\langle M, ELst^{\mm_i}, next^{\mm_i}, mngBy^{\mm_i}, \cdots,\\
 ELst_{\gho}^{\mm_i}, next_{\gho}^{\mm_i},  \cdots, Alloc^{\mm_i}, Aux^{\mm_i} \cdots  \big\rangle
  \end{array}
\]
\fi
We will show how to translate a formula for $\mm_2$ to a formula for an extended $\mm_1$.
Fields and variables in $\mm_2$ will be translated by the backwards propagation
 into expressions
involving elements of $\mm_1$. For ghost symbols $s_{\gho}$, $s_{\gho}^{\mm_1}$ will be used instead of $s_{\gho}^{\mm_2}$
since they do not change during the run of the program. Let $\tau^{rem}\subseteq \tau$ be the set of the remaining symbols,
i.e.  the symbols of $\tau \setminus (\{PossibleTargets,\MemPool \}\cup \tauField)$ which are not ghost symbols,
for example $ELst$, but not $ELst_{\gho}$, $next$ or $mngBy$.
We need the result of the backwards propagation to refer to the interpretations of symbols in $\tau^{rem}$
from $\mm_2$ rather than $\mm_1$.
Therefore,
these interpretations are copied as they are from $\mm_2$ and added to $\mm_1$ as follows.
For every $R\in \tau^{rem}$, we add a symbol $R^{ext}$ for the copied relation.
We denote by $(\bar{R}^{ext})^{\mm_1}$ the tuple
$
 \big((R^{ext})^{\mm_1} : (R^{ext})^{\mm_1}=R^{\mm_2}\mbox{ and }R\in\tau^{rem}\big)
$
Let $\tau^{ext}$ extend $\tau$ with $R^{ext}$ for each $R\in\tau^{rem}$.
The backwards propagation updates the fields and variables according to the loopless code.
Afterwards, we substitute the symbols $R\in\tau^{rem}$ in $\varphi$ with the corresponding $R^{ext}$.
We present here a somewhat simplified version of the backwards propagation lemma. The precise version is similar in spirit
and is 
\ifiFM
in the full version \cite{full-version}
\else
given in Appendix \ref{app:backward}.
\fi
\begin{lemma}[Simplified]\label{lem:backwards--} \label{se:backwards-ghost--}
Let $S$ be a loopless program, let
$\mm_1$ and $\mm_2$ be  memory structures, and
 $\varphi$ be an $\L$-formula over $\tau$.
\ifiFM
(1)
\else
\begin{enumerate}
 \item 
\fi
If $\left\langle S, \mm_1\right\rangle \leadsto \mm_2$,
 then:
 $\mm_2\models \varphi$ iff
$\left\langle \mm_1, (\bar{R}^{ext})^{\mm_1}\right\rangle \models \Theta_{S}(\varphi)$.
\ifiFM
(2) 
\else
\item 
\fi 
If $\left\langle S, \mm_1\right\rangle\leadsto \abo$, then
$\left\langle \mm_1, (\bar{R}^{ext})^{\mm_1},\right\rangle \not\models \Theta_{S}(\varphi)$.
\ifiFM \else
\end{enumerate}
\fi
\end{lemma}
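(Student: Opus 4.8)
The plan is to define the backwards-propagation operator $\Theta_S$ by structural induction on the loopless program $S$ and then prove both claims simultaneously by induction on the structure of $S$, following the grammar for loopless programs given earlier. The base cases are the atomic commands: $\mathit{skip}$, the two assignments $var_1 := e_2$ and $var_1.f := e_2$, $var := \mathit{new}$, $\mathit{dispose}(var)$, and $\mathit{assume}(b)$; the inductive cases are sequential composition $S_1;S_2$ and the two conditional forms. For each atomic command I would read off from the operational semantics (the clauses defining $\leadsto$) exactly how the interpretation of each field and variable changes, and then let $\Theta_S(\varphi)$ be the formula obtained from $\varphi$ by substituting, for each field-application or variable reference, the $\L$-expression describing its value in $\mm_1$ before the command ran. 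Concretely, for $var_1 := e_2$ the operator rewrites every occurrence of the concept $o_{var_1}$ to the concept denoting $\mathcal{E}_{e_2}$ evaluated in $\mm_1$; for a field update $var_1.f := e_2$ it rewrites the role $f$ to the role that agrees with the old $f$ everywhere except at $o_{var_1}$, where it now points to $\mathcal{E}_{e_2}$. The symbols in $\tau^{rem}$ are left untouched except that their names are swapped for the copied $R^{ext}$ versions at the very end, which is precisely why the statement speaks of the extended structure $\langle \mm_1, (\bar{R}^{ext})^{\mm_1}\rangle$.

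For the main claim~(1), the key invariant to maintain through the induction is that $\Theta_S(\varphi)$ evaluated on the extended $\mm_1$ simulates the evaluation of $\varphi$ on $\mm_2$: every atomic concept, role, and constant of $\varphi$ is mapped to an $\L$-expression over $\mm_1$ whose extension in $\mm_1$ equals its intended extension in $\mm_2$. For the atomic cases this is a direct unfolding of the semantics of $\mathcal{E}$ and $\mathcal{B}$ against the definition of $\Theta$. For sequential composition I would set $\Theta_{S_1;S_2}(\varphi) = \Theta_{S_1}(\Theta_{S_2}(\varphi))$ and appeal to the semantic rule $\langle S_1;S_2,\mm\rangle \leadsto \langle S_2,\langle S_1,\mm\rangle\rangle$ together with the induction hypothesis applied twice, being careful that the intermediate structure is again a memory structure (guaranteed by the closure property stated just before the program model). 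For the conditionals, $\Theta$ branches on the translation of the guard $b$ into $\L$ and takes a disjunction of the two propagated branches guarded by $tr(b)$ and $tr(\sim b)$, matching the semantic rule that dispatches on $\mathcal{B}_b(\mm)$.

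Claim~(2), the $\abo$ case, is handled by making $\Theta_S$ additionally conjoin a \emph{definedness guard}: the propagated formula is unsatisfiable on $\mm_1$ exactly when some expression evaluated to $\err$ during the run. For the atomic commands that can abort --- a field dereference of an unallocated variable, or $\mathit{dispose}(var)$ with $o_{var}^{\mm} \notin Alloc^{\mm}$, or a failed $\mathit{assume}$ --- the guard is the $\L$-assertion $o_{var} \sqsubseteq Alloc$ (respectively $tr(b)$), so that when the precondition for successful execution fails the whole formula is falsified. The main obstacle I anticipate is the field-update case: overwriting $f$ at a single cell must be expressed as a role within $\L$, which requires writing $f$ restricted away from $o_{var_1}$ via role difference $f \setminus (o_{var_1} \times \top)$ and then unioned with the product role $o_{var_1} \times \mathcal{E}_{e_2}$, and one must verify that this substitution commutes correctly with the role constructors $\cap$, $\cup$, $\setminus$, $(\cdot)^-$ and with nested quantifiers $\exists r.C$ under the semantics of $\L$. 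Threading this substitution soundly through every syntactic form of $\varphi$, while keeping the ghost symbols $s_{\gho}$ and the $\tau^{rem}$ symbols fixed to their $\mm_2$-values, is the delicate bookkeeping at the heart of the proof; once it is set up correctly, both claims fall out of the simultaneous induction.
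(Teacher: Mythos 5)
Your overall plan---define $\Theta_S$ by structural recursion as a substitution of ``old-state'' $\L$-expressions for fields and variables, prove claim (1) by induction with $\Theta_{S_1;S_2}=\Theta_{S_1}(\Theta_{S_2}(\varphi))$, branch on $\varepsilon_b$ for conditionals, and handle roles via $f\setminus(o_{var_1}\times\top)\cup(o_{var_1}\times o_{e})$---is exactly the paper's strategy, and your packaging of the abort case as definedness guards conjoined at each atomic command is an acceptable repackaging of the paper's device of rewriting $S$ into a non-aborting $\overline{S}$ with an $abo$ flag and propagating $\varphi\land(o_{abo}\equiv o_\false)$. However, there is one genuine gap: the case $var:=new$. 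Your recipe is to substitute for each affected symbol ``the $\L$-expression describing its value in $\mm_1$ before the command ran,'' but the operational semantics allocates \emph{some} nondeterministically chosen $t\in\MemPool^{\mm_1}$, and no $\L$-expression over $\langle\mm_1,(\bar{R}^{ext})^{\mm_1}\rangle$ denotes that particular cell ($\L$ has no element variables, so you cannot existentially quantify the choice into every occurrence of $o_{var}$ either). The paper resolves this by labelling commands, extending the target structure with a tuple $\bar{d}_{Y}$ of fresh constants (one per $new$ and per field-dereference), and refining $\leadsto$ to a determinized relation $\leadsto_{\bar{d}_{Y}}$ that pins $o_{var}^{\mm_2}$ to $d_y$; the substitution is then $\varphi[o_{var}/o_y][Alloc/Alloc\sqcup o_y]\land o_y\sqsubseteq\neg Alloc$. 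This extra structure is precisely what the word ``Simplified'' in the lemma's title is hiding, and without it the ``iff'' of claim (1) cannot be established for $new$.

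A secondary but real issue is that your simultaneous induction, as stated, does not close under sequential composition: the hypothesis speaks of $\L$-formulae over $\tau$, but $\Theta_{S_2}(\varphi)$ is a formula over the extended vocabulary $\tau^{ext}$ (it mentions the $R^{ext}$ copies and, after the fix above, the label constants $o_y$). The paper therefore proves a strengthened auxiliary lemma whose statement carries an extra set of labels $Y$ and constants $\bar{d}_Y$, and separates the pure substitution $\Psi$ from the $R\mapsto R^{ext}$ renaming $\Phi$, introducing an intermediate structure in the composition case in which the $\tau^{rem}$ relations are already overwritten with their $\mm_2$-values. You flag this as ``delicate bookkeeping,'' which is fair, but the induction hypothesis must actually be strengthened along these lines before the composition step goes through.
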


As an example of the backwards propagation process,
we consider a formula from Section \ref{se:general-meth-example},
which is part of the \domainContent~annotation of $\ell_l$
and perform the backwards propagation on the loopless program inside the loop:
\ifiFM
$\varphi_{p-as-\ell_l} =
P_1 \sqcap\exists wrkFor_{\gho}.o_{\nil} \equiv
P_1 \sqcap\exists wrkFor.o_{proj}$
\else
\[
\begin{array}{lll}
\varphi_{p-as-\ell_l} &=&
P_1 \sqcap\exists wrkFor_{\gho}.o_{\nil} \equiv
P_1 \sqcap\exists wrkFor.o_{proj}
\end{array}
\]
\fi
Since $next$ does not occur in $\varphi_{p-as-\ell_l}$, backwards propagation of $\varphi_{p-as-\ell_l}$
over $e := e.next$ does not change the formula (however $\alpha(shp(\ell_l))$
by this command). The backwards propagation of the $if$ command gives
\ifiFM
$\Psi_{S_{\ell_l}}(\varphi_{p-as-\ell_l}) = 
 \big(\neg(\exists wrkFor^-.o_e \equiv o_\nil)  \land
 \varphi_{p-as-\ell_l} \big) \lor 
 \exists wrkFor^-.o_e \equiv o_\nil\hphantom{)}\land
 \Psi_{e.wrkFor := proj}(\varphi_{p-as-\ell_l})\big)$ and
$\Psi_{e.wrkFor := proj}(\varphi_{p-as-\ell_l}) $  \linebreak
$  =
 P_1 \sqcap\exists wrkFor_\gho.o_{\nil} \equiv  
  P_1\sqcap\exists {((wrkFor\backslash (o_{e} \times \top)) \cup (o_e,o_{proj}))}.o_{proj}$. \linebreak
\else
\[
\begin{array}{ll}
 \Psi_{S_{\ell_l}}(\varphi_{p-as-\ell_l}) = &
 \big(\neg(\exists wrkFor^-.o_e \equiv o_\nil)  \land
 \varphi_{p-as-\ell_l} \big) \lor \\
 &\exists wrkFor^-.o_e \equiv o_\nil\hphantom{)}\land
 \Psi_{e.wrkFor := proj}(\varphi_{p-as-\ell_l})\big)\\
\Psi_{e.wrkFor := proj}(\varphi_{p-as-\ell_l}) =&
 P_1 \sqcap\exists wrkFor_\gho.o_{\nil} \equiv  \\
 & P_1\sqcap\exists {((wrkFor\backslash (o_{e} \times \top)) \cup (o_e,o_{proj}))}.o_{proj}
\end{array}
\]
\fi
$\Psi_{e.wrkFor := proj}(\varphi_{p-as-\ell_l})$ is obtained from $\varphi_{p-as-\ell_l}$ by substituting the $wrkFor$
role with the correction  $((wrkFor\backslash (o_{e} \times \top)) \cup (o_e,o_{proj}))$
which updates the value of $o_e$ in $wrkFor$ to $proj$.
$\Phi_{S_{\ell_l}}(\varphi_{p-as-\ell_l})$ is obtained from $\Psi_{S_{\ell_l}}(\varphi_{p-as-\ell_l})$
by subtituting ${P_1}$ with ${P_1}^{ext}$.
$\Theta$ is differs from $\Phi$ from technical reasons related to
aborting computations 
\ifiFM
(see the full version \cite{full-version}).
\else
(see Appendix \ref{app:backward}).
\fi

\section{Related Work}

\noindent
\textbf{Shape Analysis }
 attracted considerable attention in the literature.
The classical introductory paper to SL \cite{pr:Reynolds02}
presents an expressive SL which turned out to be undecidable.
We have restricted our attention to the
better behaved fragment in
\cite{pr:BCO05}.
The work on SL focuses mostly on shape rather than content in our sense.
SL has been extended to object oriented languages, cf. e.g. 
\cite{ar:ParkinsonBierman2008,ar:ChinDavidNguyenHuuQin2008},
where shape properties similar to those studied in the non objected oriented case are the focus, and
the main goal is to overcome difficulties introduced by the additional features of OO languages.
Other shape analyses could be potential candidates for integration in our methodology.
\cite{YorshRepsSagiv04} use 3-valued logic to perform shape analysis.
Regional logic is used to check correctness of program with shared dynamica memory areas \cite{ar:BNR13}.
\cite{pr:HLRSV13} uses nested tree automata to represent the heap.
\cite{pr:MadhusudanXiaokang11} combines monadic second order logic with SMT solvers.

\noindent \textbf{Description Logics }
 have not been considerd for verification of programs with
dynamically allocated memory, with the exception of \cite{pr:GM05}
whose use (mostly undecidable) DLs to express shape-type invariants,
ignoring content information.  In~\cite{pr:CalvaneseOrtizSimkus13} the
authors consider verification of loopless code (transactions) in graph
databases with integrity constraints expressed in DLs. Verification of
temporal properties of dynamic systems in the presence of DL knowledge
bases has received significant attention~(see
\cite{DBLP:conf/frocos/BaaderZ13,DBLP:conf/kr/GiacomoLP12}
and their references). \emph{Temporal Description Logics}, which
combine classic DLs with classic temporal logics, have also received
significant attention in the last decade (see
\cite{DBLP:conf/time/LutzWZ08} for a survey).

\noindent
\textbf{Related Ideas. } Some recent papers have studied verification strategies which
use information beyond the semantics of the source code. E.g., 
\cite{pr:JLTT08} is using diagrams from design documentation to support verification. \cite{pr:DVZ13,pr:ABFRR13} infer the intended use of program variables
to guide a program analysis.
Instead of starting from code and verifying its correctness, \cite{pr:HAFRS10} explores how to declaratively specify data structures with sharing and how to automatically generate code from this specification.
Given the importance of both
DL as a formalism of content representation and of program verification, and given
that both are widely studied, we were surprised to find little related work. 
However, we believe this stems from large differences between the research in the two communities,
and from the interdisciplinary nature of the work involved.


\bibliographystyle{plain}

\newpage

 \appendix
%
%
\ifiFM
\else
 \section{Separation Logic\label{se:SL-app}}

Here we expand on the treatment of separation logic in the paper.
We have defined the semantics of SL using our memory structures. 
The memory model used in \cite{pr:BCO05} is very similar to our memory structures. We give the standard 
semantics of SL here in terms of heaps and stacks, and relate it to memory structures. It is convenient 
to define first $\SL$, which does not allow list segments, and then extend to $\SLls$. 

The allocated cells of the memory all have the same finite collection of fields (denoted $Fields$).
Let $Add$ and $Val$ be disjoint sets. $Add$ is the set of addresses ({\em locations} in the terminology of
\cite{pr:BCO05}, not to be confused with our use of ``locations in a program'').
$Val$ is a set of values which include $nil$.
The description of the memory consists of two parts, a heap and a stack.
A heap is a partial function $h:Add\arrowfin(Fields\to Val\cup Add)$
which is only defined on a finite subset of $Add$.
A stack is a function from a finite set $Var$
of local variables $s:Var\to Val\cup Add$.

The syntax of $\SL$ is as follows:\\
\[
\begin{array}{l}
var_{i}\in Var,\, i\in\mathbb{N}\\
f_{i}\in Fields,\, i\in\mathbb{N}\\
E::=\nil\mid var_{i}\\
\end{array}\,\,\,\begin{array}{l}
\Pi::=\true\mid E=E\mid E\not=E\mid\Pi\land\Pi\\
\Sigma::=\emp\mid  \Sigma*\Sigma \mid \\
var\mapsto[f_{1}:E_{1},\ldots,f_{k}:E_{k}] \\
\SL\mbox{-formula}::=\Pi\bv\Sigma
\end{array}
\]
When $\Pi=\true$ we write $\Pi\bv\Sigma$ simply as $\Sigma$.
The semantics of $\SL$ is given by a relation $s,h\models\phi$ where
$s\in Stacks,h\in Heaps$. We define $\lb var_{i}\rb s\stackrel{def}{=}s(x)$
and $\lb nil\rb s\stackrel{def}{=}nil$, and:\\
\[
\begin{array}{lcl}
s,h\models true &  & \mbox{always}\\
s,h\models E_{1}=E_{2} & \iffdef & \lb E_{1}\rb s=\lb E_{2}\rb s\\
s,h\models E_{1}\not=E_{2} & \iffdef & \lb E_{1}\rb s\not=\lb E_{2}\rb s\\
s,h\models\Pi_{0}\land\Pi_{1} & \iffdef & s,h\models\Pi_{0}\mbox{ and }s,h\models\Pi_{1}\\
s,h\models var_{1}\mapsto[f_{i}:E_{i}] & \iffdef & h=\lb var_{1}\rb s\to r
 \mbox{ where }r(f_{i})=\lb E_{i}\rb s\\
s,h\models\emp & \iffdef & h=\emptyset\\
s,h\models\Sigma_{1}*\Sigma_{2} & \iffdef & \exists h_{0}h_{1}.\, h=h_{0}*h_{1}\mbox{ and }
s,h_{0}\models\Sigma_{0}\mbox{ and }s,h_{1}\models\Sigma_{1}\\
s,h\models\Pi\bv\Sigma & \iffdef & s,h\models\Pi\mbox{ and }s,h\models\Sigma
\end{array}
\]
where $h_0*h_1$ enforces there is no address in $Add$ on which both $h_{0}$ and $h_{1}$ are defined, and $h=h_{0}*h_{1}$ denotes that $h$ is the union
of $h_{0}$ and $h_{1}$. Additionally, not all fields in $Fields$ need to occur in $var_1\mapsto[f_{i}:E_{i}]$,
and those that do not are
assigned $nil$ implicitly.

$\SLls$ extends $\SL$ by adding list
segments: the syntax is extended by
\[\Sigma::=\cdots\mid\ls(E_{1},E_{2})\]
and the semantics of $\ls(E_{1},E_{2})$ is the least fixed point
of the predicate given by 
\[(E_{1}=E_{2}\land\emp)\lor(E_{1}\not=E_{2}\land\exists y.E_{1}\mapsto[n:y]*\ls(y,E_{2}))\]

\paragraph*{Heaps and Stacks vs. Memory Structures}
The memory model of SL and our memory model are easily
translatable. The distinction between values in $Val$ and addresses in $Add$ does not play a major role
in \cite{pr:BCO05}, so we simplify by setting $Val = \{nil, true,false\}$.

Given $\mm$ we define $s_\mm$ and $h_\mm$ as follows.
$Fields$ is equal to $\tauField$ and $Add = Addresses$
and we have $M=Val \cup Add$
with $nil=o_\nil^\mm$, $true=o_\true^\mm$ and $false=o_\false^\mm$.
For every variable $var_i$, $s(var_i)=o_{var_i}^\mm$.
For every $add\in Add$ we define $h(add)=h_{add}$
as follows: for every $f\in Fields$, $h_{add}(f)=f^\mm(add)$.

Given $s$ and $h$, we define $\mm_{s,h}=\mm$ as follows.
The universe $M_{s,h}$ is $Add\cup \{o_\nil^\mm,o_\true^\mm,o_\false^\mm\}$,
with $o_\nil^\mm=nil$, $o_\true^\mm=true$ and $o_\false^\mm=false$
and $Addresses^\mm = Add$.
The set of addresses on which $h$ is defined is $Alloc^\mm$.
For every field $f\in Fields$,
$f^\mm = h_f$, where $h_f(add)=h(add)(f)$ is the value the field $f$
of address $add$ receives under $h$. For every variable $var_i$,
$o_{var_i}^{\mm_{s,h}} = s(var_i)$.


The above shows how to transform the memory models into each other
in a natural way. More precisely:

\begin{lemma}
If $s,h$ are a stack and a heap,
and if $s_{\mm_{s,h}},h_{\mm_{s,h}}$
are the stack and heap
obtained by applying the above transformations on $s$ and $h$ in order, then
$s_{\mm_{s,h}}=s$ and $h_{\mm_{s,h}}=h$.
\end{lemma}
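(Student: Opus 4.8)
The plan is to prove both equalities by a direct unfolding of the two transformations, treating the stack and the heap separately. Since a stack is a function on the finite variable set $Var$ and a heap is a finite partial function on $Add$ returning records in $Fields\to Val\cup Add$, it suffices to check that the two stacks agree on every $var_i$, and that the two heaps have the same domain and return equal records on that common domain. I would phrase the whole argument as a definitional chase through the clauses of $\mm\mapsto(s_\mm,h_\mm)$ and $(s,h)\mapsto\mm_{s,h}$.

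For the stack, I would fix an arbitrary variable $var_i$ and chase the definitions: applying the transformation $\mm\mapsto s_\mm$ to $\mm=\mm_{s,h}$ gives $s_{\mm_{s,h}}(var_i)=o_{var_i}^{\mm_{s,h}}$, and by the defining clause of $\mm_{s,h}$ we have $o_{var_i}^{\mm_{s,h}}=s(var_i)$. As $var_i$ was arbitrary, $s_{\mm_{s,h}}=s$. The same chase applied to the auxiliary constants confirms $o_\nil^{\mm_{s,h}}=\nil$, $o_\true^{\mm_{s,h}}=\true$, $o_\false^{\mm_{s,h}}=\false$, so that the set $Val$ and its distinguished elements are preserved under the roundtrip.

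For the heap I would first settle the domains. The domain of $h_{\mm_{s,h}}$ is the set of allocated cells $Alloc^{\mm_{s,h}}$, and by construction of $\mm_{s,h}$ this is exactly the finite set of addresses on which $h$ is defined; hence $\mathrm{dom}(h_{\mm_{s,h}})=\mathrm{dom}(h)$. For an address $add$ in this common domain and a field $f\in\tauField$, the transformation $\mm\mapsto h_\mm$ gives $h_{\mm_{s,h}}(add)(f)=f^{\mm_{s,h}}(add)$, while the construction of $\mm_{s,h}$ sets $f^{\mm_{s,h}}=h_f$ with $h_f(add)=h(add)(f)$; composing the two yields $h_{\mm_{s,h}}(add)(f)=h(add)(f)$. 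Since each record is total on $Fields$, this establishes $h_{\mm_{s,h}}=h$.

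The argument is essentially a mechanical unfolding, so the only point requiring genuine care---and the main obstacle---is the bookkeeping around the heap being a \emph{partial} function: one must verify that its domain is tracked correctly through $Alloc^\mm$, rather than inadvertently producing a total function on $Add$, and that the split of the universe into $Add$ and $Val$ is respected, so that field values that are Booleans or $\nil$ land in $Val$ while pointer values land in $Add$. Once the domain condition is pinned down, both equalities drop out immediately from the clause-by-clause comparison above.
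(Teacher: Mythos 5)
Your proposal is correct: the paper states this lemma without proof, treating it as an immediate consequence of the two definitions, and your clause-by-clause unfolding (stack via $o_{var_i}^{\mm_{s,h}}=s(var_i)$, heap via $f^{\mm_{s,h}}=h_f$ composed with $h_{add}(f)=f^{\mm}(add)$) is exactly the intended argument. Your observation that the only delicate point is reading $h_\mm$ as defined on $Alloc^\mm$ rather than on all of $Add$, so that the roundtrip preserves the heap's partial domain, correctly identifies the one place where the paper's informal definitions need to be read charitably.
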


\begin{remark}
The semantics of the standard programming language used with separation logic allows
memory cells to be reallocated, while our programming language forbids this for technical simplicity. 
\end{remark}

\begin{table}[t]
  \centering

\[
\begin{array}{ rcl}
  tr_z(C) & = & C(z) \qquad \qquad~ C~\mbox{ is an atomic concept} \\
  tr_{z,\bar{z}}(r) & = & r(z,\bar{z}) \qquad\qquad r~\mbox{ is an atomic role} \\[1ex]
  tr_z(C\sqcap  D) & = &
  tr_z(C)\land  tr_z(D)\\
  tr_z(C\sqcap  D) & = &
  tr_z(C)\lor tr_z(D)\\
  tr_z(\neg C) & = &
  \neg tr_z(C)\\

tr_{z,\bar{z}}(r \sqcap  s) & = &
tr_{z,\bar{z}}(r)\land  tr_{z,\bar{z}}(s)\\

tr_{z,\bar{z}}(r \sqcup  s) & = &
tr_{z,\bar{z}}(r)\lor  tr_{z,\bar{z}}(s)\\

tr_{z,\bar{z}}(r \setminus  s) & = &
tr_{z,\bar{z}}(r)\land \neg tr_{z,\bar{z}}(s)\\

tr_{z,\bar{z}}(r^-) & = &
tr_{\bar{z},z}(r)\\

tr_{z,\bar{z}}(C\times D) & = &
tr_{z}(C)\land tr_{\bar{z}}(D)\\

tr_{z}(\exists r.C) & = &
\exists y. tr_{z,\bar{z}}(r)\land tr_{\bar{z}}(C)\\[1ex]

tr(C\sqsubseteq D) & = & \forall x. tr_x(C)\rightarrow tr_x(D) \\

tr(r\sqsubseteq s) & = & \forall x,y. tr_{x,y}(r)\rightarrow tr_{x,y}(s) \\

tr( \varphi\land  \psi) & = &
tr( \varphi) \land tr(  \psi)\\

tr( \varphi\lor  \psi) & = &
tr( \varphi) \lor tr(  \psi)\\

tr( \neg \varphi) & = &
\neg tr( \varphi)\\

tr(func(r)) & = &
  \forall x\exists^{\leq 1} y. tr_{x,y}(r) \\

tr(\neg  \varphi) & = &
\neg tr( \varphi) \\

\end{array}
\]\caption{Translation of $\L$ into $\mathcal{C}^2$ by employing only two
  variables $x$ and $y$. In each
  translation rule, $z\in \{x,y\}$. Moreover, $\bar{z}=y$ if $z=x$,
  and $\bar{z}=x$ if $z=y$. }
  \label{tab:fo2}
\end{table}

\section{$\ctwotree$-Satisfiability in Memory Structures}\label{app:nexptime}

 In this appendix we prove Lemma \ref{lem:fin-sat-mem}, i.e. we show that satisfiability of 
 $\ctwotree$ formulae by memory structures is in NEXPTIME.
  We employ the fact that \emph{finite
    satisfiability} of $\ctwotree$-formulae, i.e. truth in a structure with a
  finite domain, is in NEXPTIME:
   \begin{theorem}[W. Charatonik and P. Witkowski \cite{pr:CharatonikWitkowski13}]
   Finite satisfiability of $\ctwotree$ is NEXPTIME-complete.  
   \end{theorem}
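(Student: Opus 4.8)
The plan is to prove the two bounds separately. For \textbf{NEXPTIME-hardness}, I would reduce from finite satisfiability of plain $\ctwo$, which is already known to be NEXPTIME-complete. Given a $\ctwo$-formula $\psi$ over a vocabulary that does not mention the symbol $\Fone$, I form the $\ctwotree$-formula $\exists \Fone\,(\psi \land \varphi_{forest}(\Fone))$. Since the empty binary relation is a forest and $\psi$ does not constrain $\Fone$, every finite model of $\psi$ extends to a model of this formula by interpreting $\Fone$ as $\emptyset$, and conversely every model of the $\ctwotree$-formula restricts to a model of $\psi$. The reduction is linear, so hardness transfers directly.

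For the \textbf{upper bound} (membership in NEXPTIME), I would follow the standard model-theoretic route for two-variable counting logics, augmented to cope with the forest constraint. First I would put the $\ctwo$-part into Scott normal form over an extended vocabulary, yielding a conjunction of a single universally quantified formula $\forall x \forall y\,\chi(x,y)$ together with counting requirements $\forall x\, \exists^{\bowtie k} y\, \lambda(x,y)$; this step is polynomial and preserves finite satisfiability. Crucially, the functional aspect of a forest --- that every node has at most one $\Fone$-parent --- is itself $\ctwo$-expressible as $\forall x\, \exists^{\leq 1} y\, \Fone(y,x)$, so it can be absorbed into the $\ctwo$-part. The only genuinely second-order content remaining in $\varphi_{forest}(\Fone)$ is then \emph{acyclicity}.

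Next I would pass to the abstraction by 1-types: maximal consistent sets of (negated) unary literals and self-loop literals describing a single element. There are at most exponentially many 1-types, and a finite model is determined, up to the counting constraints, by the multiset of realized 1-types together with a consistent assignment of witnesses for the counting requirements between types. For pure $\ctwo$ this yields a system of integer constraints of exponential size whose feasibility is checkable in NEXPTIME by guessing the set of realized types and a bounded-magnitude solution. The added difficulty is to make this abstraction compatible with an acyclic interpretation of $\Fone$: I would stratify the domain into the \emph{roots} of $\Fone$ (elements with no parent) and their descendants, and argue that an acyclic witnessing forest can always be realized on top of a type-consistent counting solution --- for instance by establishing a small-model property in which each tree of the forest is folded so that its shape is governed by exponentially many types together with a topological ranking, reducing acyclicity to the existence of such a ranking rather than to unbounded descent.

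The \textbf{main obstacle} is precisely this last point: acyclicity is a global property that, unlike the counting requirements of $\ctwo$, cannot be read off from local neighbourhoods, and the natural blow-up constructions that inflate a type-consistent abstraction into a genuine model risk creating spurious $\Fone$-cycles. The heart of the proof must therefore be a construction that simultaneously realizes all counting requirements, by duplicating elements of each 1-type as needed, and arranges those copies into honest trees, using the ranking to guarantee that every $\Fone$-edge strictly decreases rank so that no cycle can arise. Getting these two demands to coexist --- meeting the out-degree and in-degree dictated by the counting profile while keeping the parent relation acyclic --- is where the technical work of \cite{pr:CharatonikWitkowski13} lies, and it is what ultimately keeps the complexity within NEXPTIME rather than higher.
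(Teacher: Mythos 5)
You should first note that the paper does not prove this statement at all: it is imported verbatim from Charatonik and Witkowski \cite{pr:CharatonikWitkowski13} and used as a black box, its only role being to support Lemma~\ref{lem:fin-sat-mem} via the reduction $\psi \mapsto \psi_m \land \psi$ given in the appendix. So there is no in-paper argument to compare yours against; your proposal has to stand on its own as a proof of the cited result.

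On its own terms, the hardness half of your proposal is correct and complete: since $\psi$ does not mention $\Fone$, interpreting $\Fone$ as the empty forest shows that $\psi$ and $\exists \Fone\,(\psi \land \varphi_{forest}(\Fone))$ have the same finite models up to the choice of $\Fone$, and finite satisfiability of $\ctwo$ is known to be NEXPTIME-hard. The membership half, however, contains a genuine gap, and you in effect concede it yourself: after the standard steps (Scott normal form, absorbing the in-degree condition $\forall x\,\exists^{\leq 1} y\,\Fone(y,x)$ into the $\ctwo$ part, passing to 1-types and an exponential-size system of integer counting constraints), you state that the remaining task is to realize a type-consistent counting solution by honest, acyclic trees --- and you defer exactly that step to ``the technical work of \cite{pr:CharatonikWitkowski13}.'' But that step \emph{is} the theorem: everything preceding it is off-the-shelf $\ctwo$ technology, and the ranking idea as you state it does not yet yield an argument. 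A topological ranking certifies acyclicity of a model you already have; what must be shown is that some solution of the integer constraints \emph{admits} a tree-shaped realization, and that such a realization (or a succinct certificate for it) can be guessed and verified in nondeterministic exponential time, without the duplication used to meet counting requirements creating $\Fone$-cycles or breaking the in-degree and out-degree profile. Without a concrete construction or a small-model lemma tailored to the forest constraint, the upper bound is asserted rather than proved.
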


  To show that satisfiability of a formula $\psi\in \ctwotree$ in a memory
  structure can be decided in non deterministic exponential time, it
  suffices to construct in linear time a formula $\psi_m\in \ctwo$ such that
  $\psi$ is satisfiable in a memory structure iff $\psi_m\land \psi$
  is finitely satisfiable. 
  The formula $\psi_m$ is the conjunction of
  formulae corresponding to requirements (3)-(9) we
  placed on memory structures. 
  The conjoined formulae are the 
  the translations using $tr:\L\to\ctwo$ from Table \ref{tab:fo2} 
  of the following formulae:
  \begin{itemize}[-]
\item $Aux\equiv o_\nil \sqcup o_\true \sqcup o_\false$,
\item $\Addr \sqsubseteq \neg Aux$ and $\Addr \sqcup Aux \equiv \top$,
\item $Alloc \sqsubseteq \neg \MemPool$, $Alloc \sqsubseteq \neg PossibleTargets$,
$MemPool \sqsubseteq \neg PossibleTargets$, and\\
$Alloc \sqcup PossibleTargets \sqcup\MemPool
  \sqsubseteq \Addr$, 
\item $o\sqsubseteq \neg MemPool$ for every constant symbol $o$ in $\psi$,
\item $func(f)$ and $\Addr \sqsubseteq \exists f. \neg MemPool$ for
  every $f$ of $\psi$ with $f\in \tauField$,
\item $\MemPool \sqsubseteq \exists f . o_\nil \sqcup o_\false$ for
  every $f\in \tauField$, and
\item $C\sqsubseteq \neg \MemPool$ for every atomic concept $C\in
  \tau$ with $C\neq \MemPool$.
\end{itemize}
Requirements (1) and (2) hold by the correct choice of vocabulary. 
To see that requirement (10) holds, namely that $Alloc^\mm$ and $PossibleTargets^\mm$ 
are finite while $MemPool^\mm$ is infinite, 
note that any finite model $\mm$ of $\psi_m\land \psi$ is almost the
desired memory structure with $\mm\models \psi$. The desired $\mm$ is
obtained by adding to $\MemPool^\mm$ infinitely many fresh elements
$e$ and setting $f^{\mm}(e)=o_\nil^\mm$ for all $f\in \tauField$.

\section{Translations of \texorpdfstring{$\L$}{L} and \texorpdfstring{$\SLls$}{SLls} into \texorpdfstring{$\ctwotree$}{CT2}} \label{app:translations}

As discussed in Section \ref{se:embed}, $\alpha$
translates $\SLls$-formulae into $\L$ almost exactly, with the caveat that
for every list $L$, some redundant cycles may exist in $L$. 
These cycles are not reachable from the variables of the head variable of the list through the $next$ pointer. 
Since $\L$ is a fragment of first order logic, properties related to connectivity  cannot be used to rule out these cycles.
We use $\ctwotree$ to express the necessary connectivity property. 
Recall $\beta^5(\ls)$ stated that the forest $F_1$ coincides with $next$ inside $L$ and
that the forest induced by $\Fone$ on $L$ is a tree.

Here we define the two translations $\alpha$ and $\beta$ so that the satisfy the Lemmas \ref{lem:sl-to-dl-2-alpha}
and \ref{lem:sl-to-dl-2-beta}, restated here:
\begin{lemma}
\begin{enumerate}[i.]
 \item For every heap $h$ and stack $s$, if $\varphi\in\SLls$ then:
if $s,h\models\varphi$ then $\mm_{s,h} \models \alpha(\varphi)$.
\item  For every heap $h$ and stack $s$, if $\varphi\in\SLls$ then:
$s,h\models\varphi$ iff $\mm_{s,h} \models \beta(\varphi)$.
\end{enumerate}
\end{lemma}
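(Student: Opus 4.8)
The plan is to first complete the inductive definitions of $\alpha$ and $\beta$ on all of $\SLls$ (extending the base clauses $\alpha(\ls)$ and $\beta(\ls)$ from Section~\ref{se:embed}), and then prove both claims by induction on the spatial conjunction, with the single $\ls$-atom as the heart of the argument. For a spatial formula $\Sigma=\beta_1*\cdots*\beta_r$ I introduce fresh concept names $P_1,\dots,P_r$, assert that they partition $Alloc$ (pairwise disjoint with union $Alloc$), and set $\alpha(\Sigma)=\bigsqcap_i\alpha(\beta_i)[L:=P_i]$: for a points-to atom $var\mapsto[f_1:E_1,\dots,f_k:E_k]$ the clause says $P_i\equiv o_{var}$ and $f_j^\mm(o_{var}^\mm)=o_{E_j}^\mm$, and for an $\ls$-atom it is the $\alpha(\ls)$ of the body with head/tail constants and $L$ instantiated to $P_i$. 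The pure part $\Pi$ translates directly into (in)equalities of constants, and $\beta(\Sigma)=\exists\Fone\,tr(\alpha(\Sigma))\land\beta^5(\Sigma)\land\varphi_{forest}(\Fone)$, where $\beta^5(\Sigma)$ conjoins the $\beta^5$-clauses of the $\ls$-atoms. Throughout I write $\mm$ for $\mm_{s,h}$.

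For claim (i) I would verify each conjunct of $\alpha(\varphi)$ in $\mm$. The SL semantics of $*$ splits $\mathrm{dom}(h)=Alloc^\mm$ into the parts witnessing the $\beta_i$; these are exactly the $P_i^\mm$ and they partition $Alloc^\mm$. For a points-to atom the single cell and its field values are read off the heap directly. For a nonempty $\ls(E_1,E_2)$-atom, the witnessing sub-heap is, by the least-fixed-point semantics, a genuine acyclic $next$-chain from $E_1^\mm$ to $E_2^\mm$, so conditions (1)--(4), i.e.\ $\alpha^1$--$\alpha^4$, hold by inspection; the empty case gives $\alpha_{emp-ls}$. Since this direction only claims an implication, the spurious cycles that $\alpha$ cannot exclude cause no difficulty.

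Claim (ii) requires both directions. The forward direction reuses claim (i) to get $\mm\models\alpha(\varphi)$, hence $\mm\models tr(\alpha(\varphi))$ by Lemma~\ref{lem:dl-to-ct2} (note $\alpha(\varphi)$ does not mention $\Fone$), and then exhibits the witness $\Fone^\mm := next^\mm\cap\bigcup_i(P_i^\mm\times P_i^\mm)$; since each list part is a genuine acyclic chain, this $\Fone$ is a forest whose restriction to each $P_i^\mm$ is a single path rooted at the head, which verifies $\beta^5(\Sigma)$ and $\varphi_{forest}(\Fone)$. The backward direction is the main obstacle: from a witness forest $\Fone$ I must upgrade ``$\alpha(\varphi)$ plus a forest'' into a genuine list decomposition. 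The key step is cycle elimination: any $next$-cycle inside a list part $L$ would, by $\beta^5(\ls)$ (which forces $\Fone$ and $next$ to agree on $L$), yield an $\Fone$-cycle, contradicting $\varphi_{forest}(\Fone)$. Having excluded cycles, the functionality of $next^\mm$, the finiteness of $Alloc^\mm$ (hence of $L$), the unique-root condition of $\beta^5$, and $\alpha^1$--$\alpha^4$ together force $L$ to be exactly one acyclic $next$-chain from $E_1^\mm$ to $E_2^\mm$.

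From these genuine chains and the singleton parts I then reconstruct the heap splitting $h=h_1*\cdots*h_r$ demanded by the SL semantics of $*$ and conclude $s,h\models\varphi$. I expect the delicate points here to be (a) checking that the $\L$-level partition of $Alloc^\mm$ into the $P_i$ descends to a genuine disjoint heap splitting consistent with the least-fixed-point unfolding of $\ls$, and (b) matching disjointness of the $P_i$ with the disjointness of heaps in $*$, so that the reconstructed $h_i$ are well defined and cover exactly $\mathrm{dom}(h)$. Because both the forward and backward arguments are uniform in the structure of $\Sigma$ once the $\ls$-case is settled, the induction over $*$ and the treatment of the pure part $\Pi$ (which is purely first-order and hence captured verbatim by $\alpha$ and $tr$) are routine.
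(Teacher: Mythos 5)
Your proposal matches the paper's approach: the appendix defines $\alpha$ and $\beta$ exactly as you do (per-conjunct translation with partition concepts $P_i$, the $\alpha^1$--$\alpha^4$/$\alpha_{emp\mbox{-}ls}$ clauses for lists, and $\beta^5$ plus $\varphi_{forest}$ to kill the spurious cycles via the disjointness of the $P_i$), and the correctness is asserted essentially ``by construction.'' Your write-up is in fact more detailed than the paper's, and your cycle-elimination and chain-reconstruction arguments for the backward direction of (ii) are sound; the only micro-quibble is that your witness forest should be restricted to the $\ls$-parts (a points-to cell whose $next$ is a self-loop would otherwise break $\varphi_{forest}$), which does not affect the argument.
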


It is convenient to define the following notation:
if $\varphi=\Pi\bv\Sigma$, then there exist $\delta_1,\ldots,\delta_r$
such that $\Sigma=\delta_1 * \cdots * \delta_r$
and the $\delta_i$ are of the form
$var_{1}\mapsto[f_{i}:E_{i}]$.
We use the concepts $P_1,\ldots,P_r$ which partition
the allocated memory cells according to $\Sigma$.

First we define the formula $\alpha(\varphi)$ for $\varphi\in \SL$. \\
\[
\begin{array}{lll}
\alpha({E_{1}=E_{2}}) & = &
\left(o_{E_{1}}\equiv o_{E_{2}}\right)\\
\alpha(E_{1}\not=E_{2}) & = &
\neg\left(o_{E_{1}}\equiv o_{E_{2}}\right)\\
\alpha(\Pi_{0}\land\Pi_{1})
&= &
\alpha(\Pi_{0})\land\alpha(\Pi_{1})\\
\alpha(true)
&=&
\left(\top\sqsubseteq\top\right) 
\\
 \alpha(\emp)
&=&
\left(Alloc\equiv\bot\right)\\
\alpha(\delta_t)
&=&
(P_t \equiv o_{var_{1}})
\land\bigwedge_{i=1}^{k}((o_{var_{1}},o_{E_{i}})\sqsubseteq f_{i}),\\
& & \mbox{for }\delta_t = var_{1}\mapsto[f_{i}:E_{i}:i=1,\ldots,k]\\
\alpha(\Sigma)
&=&
 \bigwedge_{1\leq t\leq r }\alpha(\delta_t)\land \\ 
&&
\bigwedge_{1\leq t_1 < t_2\leq r }( P_{t_1}\sqcap P_{t_2} \equiv \bot)\\
\alpha(\Pi\bv\Sigma)
&=&
P_1\sqcup \cdots \sqcup P_r\equiv Alloc
\land\alpha(\Pi)\land\alpha(\Sigma)
\end{array}
\]
By construction we have 
for every $\varphi\in \SL$, $s,h\models \varphi$  iff $\mm_{s,h} \models \alpha(\varphi)$.

Now we turn to $\SLls$. Here $\Sigma=\delta_1 * \cdots * \delta_r$
where the $\delta_i$ are of either of the forms
$var_{1}\mapsto[f_{i}:E_{i}]$ or $ls(E_1,E_2)$.
If $\delta_i=\ls(E_1,E_2)$, then 
$\alpha(\delta_i)$ is defined similarly to the definition of $\alpha(\ls)$ in
Section \ref{se:embed} using $P_i$, $E_1$ and $E_2$: 
\[
 \begin{array}{llll}
\alpha^1(\delta_i)=(o_{E_1}  \sqsubseteq  P_i)\\
\alpha^2(\delta_i)=(o_{E_2}  \sqsubseteq  \exists next^{-}.P_i)\\
\alpha^3(\delta_i)=(o_{E_2}  \sqsubseteq  \neg P_i)\\
\alpha^4(\delta_i)=(P_i  \sqsubseteq  o_{E_1}\sqcup \exists next^{-}.P_i)\\
\end{array}
\]
Let 
\begin{eqnarray*}
 \alpha_{emp-ls}(\delta_i)&=& (P_i\sqsubseteq \bot) \land (o_{E_1}=o_{E_2})\\
  \alpha(\delta_i) &=& \alpha^1(\delta_i) \land \cdots \land \alpha^4(\delta_i)\lor \alpha_{emp-ls}(\delta_i)
\end{eqnarray*}
and $\alpha(\ls(E_1,E_2)) = \alpha^1(\ls(E_1,E_2)) \land \cdots \land \alpha^4(\ls(E_1,E_2))$.

Now we turn the translation $\beta$. Similarly to $\beta^5(\ls)$ from Section \ref{se:embed}, 
we define $\beta^5(\delta_i)$ for each $\delta_i=\ls(E_1,E_2)$ as:
\[
\begin{array}{l}
\forall x\forall y\, \big[(P_i(x)\land P_i(y))\to (\Fone(x,y) \leftrightarrow next(x,y))\big]\land \\
 \forall x \big[\big(P_i(x)\land\forall y \,(P_i(y)\to\neg \Fone(y,x))\big)\to (x\approx o_{var_1})\big]
\end{array}
\]
$\beta^5(\delta_i)$ states that the forest $F_1$ coincides with $next$ inside $L$ and
that the forest induced by $\Fone$ on $P_i$ is a tree.
Let $I\subseteq \{1,\ldots,r\}$ be the set of $i$
such that $\delta_i$ is of the form $\ls(E_1,E_2)$. Let
\[
\begin{array}{ll}
\beta^5(\Sigma)& = \bigwedge_{i\in I}\beta^5(\delta_i)  \\ 
\beta(\Pi\bv\Sigma) &=\exists \Fone\, tr(\alpha(\Pi\bv\Sigma)) \land \beta^5(\Sigma) \land \varphi_{forest}(F_1)
\end{array}
\]
Note that to get that $\beta^5(\Sigma)$ indeed states the connectivity condition for each of the lists we use the fact that
$P_1\ldots,P_r$ are disjoint, and therefore the trees we quantify for the different lists are disjoint. 

 \section{Cyclic Data Structures in $\ctwotree$} \label{app:cyclic}

Here we want to clarify that cyclic data structures such as cyclic lists which are expressible in $\SLls$
are translated correctly into $\ctwotree$ by $\beta$. 

Consider the formula $ls(a,b)*ls(b,a)$ which defines a cyclic list with at least two elements. 
In the translation to $\ctwotree$, $P_1$ contains the elements of the list from $a$ to $b$, 
and $P_2$ contains the elements of the list from $b$ to $a$. 
Importantly, $b$ does not belong to $P_1$, and $a$ does not belong to $P_2$. 
This is captured by $\alpha^3$ in the translation of $\SLls$ to $\L$ in Section \ref{se:embed}
or Appendix \ref{app:translations}. 

The translation requires (in $\beta^5$) that the forest $F_1$ coincides with $next$ \textbf{inside} $P_1$ and $P_2$. 
However, crucially, there is no requirement on $next$ \textbf{between} $P_1$ and $P_2$, see
the definitions of $\beta^5$ and $\beta$ in Section \ref{se:embed} and Appendix \ref{app:translations}. 

As a result, in the cyclic list $ls(a,b)*ls(b,a)$, not all $next$ edges are 
required to belong to $F_1$. Rather, the two edges that point to $a$ and to $b$ respectively are not required to belong to $F_1$.

In more detail, if $\mm \models ls(a,b)*ls(b,a)$ then:
\begin{enumerate}
 \item $P_1^\mm=\{a_1,...,a_i\}$ such that $a_1=a$ 
 and the edges $(a_j,a_{j+1})$ belong both to $next^\mm$ and to $F_1^\mm$,
  for $1\leq j\leq i-1$.
 \item The edge $(a_i,b)$ belongs to $next^\mm$ but might not belong to $F_1^\mm$.
 \item $P_2^\mm=\{b_1,...,b_k\}$ such that $b_1=b$ and
 the edges $(b_j,b_{j+1})$ belong both to $next^\mm$ and to $F_1^\mm$, for $1\leq j\leq k-1$.
 \item The edge $(b_k,a)$ belongs to $next^\mm$ but might not belong to $F_1^\mm$.
 \end{enumerate}

 Additionally note that the translation of a formula of the form $var_{1}\mapsto[f_{i}:E_{i}]$ 
 also does not require the $next$ edge from $var_1$ to belong to $F_1$. 
 \section{Backwards propagation}
\label{se:backwards-app}\label{app:backward}

Here we give the exact formulation of the backwards propagation lemma
and prove it. 

It is convenient to consider a program $\overline{S}$, which behaves like $S$, except that it does not abort. 
$\overline{S}$ uses a fresh variable $abo$ to indicate whether $S$ aborts. 
 The command $abo := \false$ is added at the beginning of the code.
 Every command $C$ of the form $var_1 := var_2.f$, $var_2.f := var_1$ or $dispose(var_2)$ is replaced with $\overline{C} = if\,\, var_2\,\, =\,\, \nil\,\, then\,\, abo := \true\,\, else\,\, C\,\, fi$.
 For $if$, $assume$ and assignments of the form $var_1.f_1 := var_2.f_2$ commands the case is similar, except that there may be two evaluations
of the form $var_i.f_j$, which need to be reflected in the condition in $\overline{C}$.
By the construction of $\overline{S}$, $\overline{S}$ has the following properties:
\begin{enumerate}
 \item The run of $\mm_1$ on $\overline{S}$ does not abort for any $\mm_1$. 
 \item $abo$ has the value $\true$ at the end of the run of $\overline{S}$ on $\mm_1$
if and only if $S$ aborts on $\mm_1$.
 \item If $\left\langle S,\mm_1\right\rangle\leadsto \mm_2$, then $\left\langle \overline{S},\mm_1\right\rangle\leadsto \mm_2$. 
\end{enumerate}

We need a further extension of our structures, which uses a refined $\leadsto$ relation. 
The refined $\leadsto$ relation will get rid of some non-determinism in the semantics of the programming language. 

Given a finite set $Y$ of labels and a tuple $\bar{d}_Y=(d_y: y\in Y)$ of elements of $M$,
we denote by $\left\langle \mm_1, (\bar{R}^{ext})^{\mm_1}, \bar{d}_Y \right\rangle$
the structure obtained from $\left\langle \mm_1, (\bar{R}^{ext})^{\mm_1}\right\rangle$ by adding the constants $d_y$ for each $y\in Y$.
The vocabulary $\tau_{Y}^{ext}$ of $\left\langle \mm_1, (\bar{R}^{ext})^{\mm_1}, \bar{d}_Y \right\rangle$ extends
$\tau^{ext}$ by constant symbols $\{o_y: y\in Y\}$. 

Given a loopless program $S'$, we assign unique labels $y$ to the commands of $S'$.
For any loopless program $S'$, we denote by $Y_{S'}$ the set of labels of commands in $S'$.
The $\leadsto_{{\bar{d}_{Y_S'}}}$ relation is obtained from the $\leadsto$ relation as follows:\\
$\left\langle S',\mm_1\right\rangle\leadsto_{\bar{d}_{Y_{S'}}}  \mm_2$ iff $\left\langle S',\mm_1\right\rangle\leadsto \mm_2$, except in the three following cases
for $S'$:
\begin{itemize}
 \item[--] $y: var_1 := var_2.f$: $\left\langle S',\mm_1\right\rangle\leadsto_{d_y}  \mm_2$ iff 
 $\left\langle S',\mm_1\right\rangle\leadsto  \mm_2$ and  $f^{\mm_1}(o_{var_2}^{\mm_1})=d_y$. 
           Else, $\left\langle S',\mm_1\right\rangle\leadsto_{d_y} \abo$. 
 \item[--] $y: var:=new$: $\left\langle S',\mm_1\right\rangle\leadsto_{d_y}  \mm_2$ iff 
 $\left\langle S',\mm_1\right\rangle\leadsto  \mm_2$ and  $o_{var}^{\mm_2}=d_y$. 
           Otherwise, $\left\langle S',\mm_1\right\rangle\leadsto_{d_y} \abo$. 
 \item[--] $S'_1; S'_2$: If $\left\langle S'_1,\mm_1\right\rangle\leadsto_{\bar{d}_{Y_{S'_1}}}  \mm'$ and
$\left\langle S'_2,\mm'\right\rangle\leadsto_{\bar{d}_{Y_{S'_2}}}  \mm_2$, 
then $\left\langle S'_1;S'_2,\mm_1\right\rangle\leadsto_{\bar{d}_{Y_{S'_1}}\cup\bar{d}_{Y_{S'_2}}}  \mm_2$. 
\end{itemize}
The main observation is:
\begin{lemma}
For any two memory structures $\mm_1$ and $\mm_2$, 
$\left\langle S',\mm_1\right\rangle\leadsto \mm_2$ iff there exists a tuple $\bar{d}_{Y_S'}$ such that
$\left\langle S',\mm_1\right\rangle\leadsto_{\bar{d}_{Y_S'}}  \mm_2$ .
\end{lemma}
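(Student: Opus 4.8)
The plan is to proceed by structural induction on the loopless program $S'$, exploiting the fact that, by its very definition, $\leadsto_{\bar d_{Y_{S'}}}$ is obtained from $\leadsto$ by imposing \emph{additional} constraints tying the auxiliary elements $d_y$ to values occurring in the computation, while leaving $\leadsto$ unchanged for every command outside the three distinguished cases. This immediately splits the biconditional into an easy constraint-dropping direction and a witness-choosing direction.

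For the direction from right to left, suppose $\langle S', \mm_1\rangle \leadsto_{\bar d_{Y_{S'}}} \mm_2$ for some tuple $\bar d_{Y_{S'}}$. I would argue $\langle S', \mm_1\rangle \leadsto \mm_2$ by induction on $S'$. For every command that is \emph{not} one of $y\colon var_1 := var_2.f$, $y\colon var := new$, or a sequential composition, the relation $\leadsto_{\bar d}$ coincides with $\leadsto$ and there is nothing to prove. For the field-read and allocation commands, the definition stipulates that a non-aborting result $\mm_2$ under $\leadsto_{\bar d}$ already requires $\langle S', \mm_1\rangle \leadsto \mm_2$, plus the extra equation $f^{\mm_1}(o_{var_2}^{\mm_1}) = d_y$ respectively $o_{var}^{\mm_2} = d_y$, so the claim follows simply by forgetting the equation. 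For $S' = S'_1; S'_2$, the definition supplies an intermediate structure $\mm'$ with $\langle S'_1,\mm_1\rangle \leadsto_{\bar d_{Y_{S'_1}}} \mm'$ and $\langle S'_2, \mm'\rangle \leadsto_{\bar d_{Y_{S'_2}}} \mm_2$; applying the induction hypothesis to each and composing via the $\leadsto$-rule for sequencing yields $\langle S'_1;S'_2,\mm_1\rangle \leadsto \mm_2$.

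For the direction from left to right, suppose $\langle S',\mm_1\rangle \leadsto \mm_2$. I would again induct on $S'$ and build the witnessing tuple along the derivation. For a field read $y\colon var_1 := var_2.f$ I set $d_y := f^{\mm_1}(o_{var_2}^{\mm_1})$, which is legitimate because the derivation did not abort, hence $o_{var_2}^{\mm_1} \in Alloc^{\mm_1}$ and the field value exists; for an allocation $y\colon var := new$ I set $d_y := o_{var}^{\mm_2}$, the cell that the concrete $\leadsto$-step moved from $\MemPool^{\mm_1}$ into the allocated set. With these choices the defining equations of $\leadsto_{\bar d}$ hold by construction. For a sequential composition I take the intermediate $\mm'$ provided by the $\leadsto$-derivation, obtain tuples $\bar d_{Y_{S'_1}}$ and $\bar d_{Y_{S'_2}}$ from the two induction hypotheses (the latter relative to $\mm'$), and form their union as prescribed. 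All remaining command forms carry no labels of their own, so the empty tuple works.

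The only genuinely delicate point is the allocation command interacting with sequencing: allocation is the one place where $\leadsto$ is truly non-deterministic, so I must check that the cell recorded by $d_y := o_{var}^{\mm_2}$ is the cell actually chosen by the step. This is immediate for an isolated $var := new$, but inside a sequence the witness for $S'_1$ must be read off the correct intermediate structure $\mm'$ rather than $\mm_2$, which is precisely why the inductive formulation threads the witnesses through $\mm'$. Everything else is a routine case distinction: a conditional reduces, via the operational rules for the $if$ constructs, to the execution of its taken branch, so both relations are inherited from that branch and the induction hypothesis applies directly, the labels of the untaken branch being filled arbitrarily in the forward direction and discarded in the backward one; and no subtlety arises from the values-versus-addresses distinction, since memory structures identify both inside $M$.
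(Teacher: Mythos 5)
Your proof is correct: the paper states this lemma as a ``main observation'' without proof, and your structural induction --- dropping the extra equations in one direction and reading the witnesses $d_y$ off the actual computation (in particular off the correct intermediate structure for sequencing, and off $f^{\mm_1}(o_{var_2}^{\mm_1})$ resp.\ $o_{var}^{\mm_2}$ for field reads and allocation) in the other --- is exactly the routine argument the paper leaves implicit. No gap to report.
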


We are now ready to state Lemma \ref{lem:backwards--} precisely:
\begin{lemma}\label{lem:backwards-app} \label{se:backwards-ghost-app}
Let $S$ be a loopless program, $Y_{\overline{S}}$ be the set of labels of commands in $\overline{S}$,
$\mm_1$ and $\mm_2$ be  memory structures, $\bar{d}_{Y_{\overline{S}}}$ be a tuple of $M$ elements
labeled with the labels in $Y_{\overline{S}}$, $d_{abo}\in M$, and
 $\varphi$ be an $\L$-formula over $\tau$.
\begin{enumerate}
 \item If $\left\langle S, \mm_1\right\rangle \leadsto_{\bar{d}_{Y_{\overline{S}}}} \mm_2$, 
 then:\\
 $\mm_2\models \varphi$ iff
$\left\langle \mm_1, (\bar{R}^{ext})^{\mm_1}, \bar{d}_{Y_{\overline{S}}}, d_{abo} \right\rangle \models \Theta_{S}(\varphi)$.
\item If $\left\langle S, \mm_1\right\rangle\leadsto_{\bar{d}_{Y_{\overline{S}}}}  \abo$, then
  for every tuple $\bar{d}_{Y_{\overline{S}}}$ of $M$ elements, 
$\left\langle \mm_1, (\bar{R}^{ext})^{\mm_1},
\bar{d}_{Y_{\overline{S}}},d_{abo} \right\rangle \not\models \Theta_{S}(\varphi)$.
\end{enumerate}
The vocabulary of the structure $\left\langle \mm_1, (\bar{R}^{ext})^{\mm_1}, \bar{d}_{Y_{\overline{S}}}, d_{abo} \right\rangle$ is 
$\tau^{ext}_{Y_{\overline{S}}} \cup\{o_{abo}\}$. 

The definition of $\Theta_S$ is:
\begin{definition}$\Theta_{S}(\varphi)=\Phi_{\overline{S}}(\varphi\land (o_{abo} \equiv o_\false)))$,
$\Phi$ is obtained from $\Psi$ by
substituting every  symbol $R\in\tau^{rem}$ in $\varphi$ by $R^{ext}$, and
$\Psi$ is defined as:
\[\begin{array}{lll}
\Psi_{skip}(\varphi) &=&\varphi\\
\Psi_{var_{1}:=e}(\varphi)&=&\varphi[{o_{var_{1}} /  o_{e}}],\, e=var_{2}\mbox{ or }e=\nil\\
\Psi_{y: var_{1}:=var_{2}.f}(\varphi)&=&\varphi[o_{var_{1}} / o_y] \land (\exists f^{-}.o_{var_{2}}\equiv o_y)\\
\Psi_{var_{1}.f:=e}(\varphi)&=&\varphi[f / f\backslash(o_{var_{1}}\times\top)\cup(o_{var_{1}},o_{e})],\, \\
&&\mbox{where }e=var_{2}\mbox{ or }e=\nil\\
\Psi_{if\,\, b\,\, then\,\,  S_{1}\,\, else\,\, S_{2}\,\, fi}
(\varphi)&=&\varepsilon_{b}\land\Psi_{S_{1}}(\varphi)\lor\neg\varepsilon_{b}\land\Psi_{S_{2}}(\varphi)\\
\Psi_{y:var:=new}(\varphi)&=&\varphi[o_{var} /  o_{y}][Alloc / Alloc\sqcup o_{y}] \\
&&\land o_y\sqsubseteq \neg Alloc\\
\Psi_{dispose(var)}(\varphi)&=&\Psi_{S_{disp}}(\varphi[Alloc / Alloc\sqcap\neg o_{var}]),\\
&&\mbox{where }S_{disp}=var.f_{k_1} := \nil; \\ 
 && \cdots;  var.f_{k_w} := \nil\\
\Psi_{S_{1};S_{2}}(\varphi)&=&\Psi_{S_{1}}(\Psi_{S_{2}}(\varphi)) \
\end{array}
\]
The notation $\varphi[A / B]$ should be interpreted as
the syntactic replacement of any occurrence of $A$ with $B$.
We write e.g. $y: var:= new$ to indicate that the command $var := new$ is labeled with $y$.
$\varepsilon_{b}$ is defined inductively:
for
$e_{1}=e_{2}$ we set $\varepsilon_{b}=(A_{e_{1}}\equiv A_{e_{2}})$,
with $A_{var}=o_{var}$ and $A_{var.f}=\exists f^{-}.o_{var}$; $\varepsilon$
extends naturally to the Boolean connectives.
In the definition of $\Psi_{dispose(var)}$, $f_{k_1},\ldots,f_{k_w}$ are the members of $\tauField$ which
occur in $\varphi$. W.l.o.g. we assume that $S$ does not contain commands of the form $if\,\,b\,\,then\,\,S_1\,\, fi$
or $var_1.f_1 := var_2.f_2$, since they can be expressed using the other commands.
\end{definition}
\end{lemma}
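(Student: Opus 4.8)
The plan is to isolate the heart of the statement --- the correctness of the backward substitution on runs that do not abort --- as a Core Claim proved by structural induction, and then to recover both parts of the lemma by instantiating the Core Claim at the non-aborting program $\overline{S}$. The Core Claim reads: for every loopless program $S'$, every choice tuple $\bar{d}_{Y_{S'}}$, and all memory structures with $\langle S', \mm_1\rangle \leadsto_{\bar{d}_{Y_{S'}}} \mm_2$ (a completing, non-aborting run), and for every $\L$-formula $\varphi$,
\[
 \mm_2 \models \varphi \iff \langle \mm_1, (\bar{R}^{ext})^{\mm_1}, \bar{d}_{Y_{S'}}\rangle \models \Phi_{S'}(\varphi).
\]
Here the copies $(R^{ext})^{\mm_1} = R^{\mm_2}$ carry the \emph{post}-state interpretations of the content symbols $R \in \tau^{rem}$, which $\leadsto$ re-chooses freely and which $\Psi$ never rewrites, while the constants $o_y$ (interpreted by $d_y$) record the dereferenced or freshly allocated values fixed by $\bar{d}$. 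Since every clause defining $\Psi$ is an equivalence, not a one-sided implication, the two directions of the iff are obtained at once.

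I would prove the Core Claim by induction on $S'$, carrying the invariant at the level of $\Psi$ in a \emph{mixed} structure $\mm_1^{\star}$ that agrees with $\mm_1$ on the program-state symbols (variables, fields, $Alloc$ and the structural symbols) but with $\mm_2$ on the content symbols; the global renaming $R \mapsto R^{ext}$ then turns the $\Psi$-statement in $\mm_1^{\star}$ into the displayed $\Phi$-statement. For the atomic commands: $var_1 := e$ is handled by $[o_{var_1}/o_e]$, which mirrors the single update; for $y : var_1 := var_2.f$ the conjunct $\exists f^-.o_{var_2} \equiv o_y$ pins $o_y$ to $f^{\mm_1}(o_{var_2}^{\mm_1}) = d_y$, using that the concept $\exists f^-.o_{var_2}$ denotes the singleton of the $f$-value of $var_2$ by functionality of $f$, after which $[o_{var_1}/o_y]$ records the assignment; for $var_1.f := e$ the role rewriting $f \backslash (o_{var_1}\times\top) \cup (o_{var_1},o_e)$ evaluated over $\mm_1$ computes exactly $f^{\mm_2}$; and $var := new$ and $dispose(var)$ are discharged by the explicit $Alloc$-substitutions, the freshness assertion $o_y \sqsubseteq \neg Alloc$, and the field-nilling block $S_{disp}$, invoking the memory-structure conditions to see that the rewritten allocation again describes a genuine memory structure. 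The inductive cases are the conditional, where an auxiliary sublemma that $\mm \models \varepsilon_b$ iff $\mathcal{B}_b(\mm) = o_\true$ (again using $A_{var.f} = \exists f^-.o_{var}$) selects the correct branch, and sequential composition, where $\Psi_{S_1;S_2} = \Psi_{S_1}\circ\Psi_{S_2}$ chains the two hypotheses through the intermediate state.

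Both parts then follow by applying the Core Claim to $\overline{S}$, which never aborts and whose final value $o_{abo}^{\mm_2}$ is $\true$ exactly when $S$ aborts. Writing $\Theta_S(\varphi) = \Phi_{\overline{S}}(\varphi \land (o_{abo} \equiv o_\false))$ and using that a completing run of $S$ and the matching run of $\overline{S}$ reach the same $\mm_2$ with $o_{abo}^{\mm_2} = \false$, while an aborting run of $S$ corresponds to a completing run of $\overline{S}$ with $o_{abo}^{\mm_2} = \true$: for part~(1), instantiate the Core Claim with $\psi = \varphi \land (o_{abo} \equiv o_\false)$ and note $\mm_2 \models \psi \iff \mm_2 \models \varphi$; for part~(2), instantiate with the same $\psi$ and observe that $\mm_2 \not\models (o_{abo} \equiv o_\false)$ forces the right-hand side to be false. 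The determinization observation relating $\leadsto$ to $\leadsto_{\bar{d}}$ supplies the choice tuple needed to move between the two relations.

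I expect the main obstacle to be the bookkeeping that keeps the two roles of the vocabulary apart throughout the induction: the content symbols of $\tau^{rem}$, whose contribution to $\varphi$ must be read off the post-state $\mm_2$ and are therefore frozen into the copies $R^{ext}$, versus the variables, fields and $Alloc$ that $\Psi$ actively rewrites and that must be read off the pre-state $\mm_1$. This is most delicate in the sequential-composition case, where the content values of the intermediate state are irrelevant and must be held fixed at the final $\mm_2$-values --- so the inductive statement has to be phrased with enough generality over the content symbols for the renaming to $\Phi$ to remain sound --- and in the allocation and deallocation clauses, where conditions (6)--(10) on memory structures are needed to ensure that the rewritten $Alloc$ and fields still yield a valid memory structure.
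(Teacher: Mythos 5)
Your proposal matches the paper's proof: your Core Claim is precisely the paper's auxiliary lemma on $\Phi$ (proved by the same structural induction over the atomic commands, with the same mixed-structure device for sequential composition, where the intermediate state's content symbols are overwritten by their $\mm_2$-values), and both parts of the lemma are then recovered exactly as you describe by instantiating at $\overline{S}$ with the formula $\varphi \land (o_{abo} \equiv o_\false)$. The one detail to make explicit is that the auxiliary statement must be generalized over an additional set of label constants $o_y$ disjoint from $Y_{S'}$ (not only over the content symbols), since in the composition case the inner formula $\Psi_{S_2}(\varphi)$ already mentions the labels of $S_2$ when the induction hypothesis is applied to $S_1$.
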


To prove Lemma \ref{lem:backwards-app} we need the following lemma:

\begin{lemma}\label{lem:backwards2-app}
Let $S$ be a loopless program without $assume$ commands, $Y_S$ be the set of labels of commands in $S$,
$Y$ be a set of labels disjoint from $Y_S$,
$\mm_1$ and $\mm_2$ be memory structures with universe $M$
and $\bar{d}_{Y_S}$ a tuple of $M$ elements
such that 
$\left\langle S, \mm_1\right\rangle\leadsto_{\bar{d}_{Y_S}} \mm_2$.
Let $\bar{d}_Y$ be a tuple of $M$ elements and
 $\varphi$ be an $\L$-formula over $\tau\cup\{o_{y}: y\in Y\}$.
$\left\langle \mm_2, \bar{d}_Y \right\rangle\models \varphi$ iff
$\left\langle \mm_1, (\bar{R}^{ext})^{\mm_1}, \bar{d}_Y, \bar{d}_{Y_S} \right\rangle \models \Phi_S(\varphi)$.
\end{lemma}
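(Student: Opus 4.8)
\section*{Proof proposal for Lemma~\ref{lem:backwards2-app}}

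The plan is structural induction on the loopless program $S$, but first isolating the role of the content symbols $\tau^{rem}$. The key observation is that $\leadsto$ never constrains the interpretation of a symbol in $\tau^{rem}$: an atomic command only rewrites fields in $\tauField$, variables in $\tauVar$ and the predicate $Alloc$ (and, for $dispose$, the field values of one cell). Correspondingly $\Psi_S$ leaves every $\tau^{rem}$ symbol of $\varphi$ syntactically untouched, and $\Phi_S(\varphi)$ is exactly $\Psi_S(\varphi)$ with each such $R$ replaced by $R^{ext}$. (Note $\Psi$ has no clause for $assume$, which is why we restrict to $assume$-free $S$.) I would therefore first prove a \emph{$\Psi$-version} in which the two structures share the same content interpretation: if $\langle S,\mm_1\rangle\leadsto_{\bar{d}_{Y_S}}\mm_2$ and $R^{\mm_1}=R^{\mm_2}$ for all $R\in\tau^{rem}$, then $\langle\mm_2,\bar{d}_Y\rangle\models\varphi$ iff $\langle\mm_1,\bar{d}_Y,\bar{d}_{Y_S}\rangle\models\Psi_S(\varphi)$.

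The engine in every atomic case of the $\Psi$-version is the standard substitution lemma: replacing a relation or role symbol by a definable expression preserves truth whenever the expression's interpretation equals the intended new interpretation of that symbol. For $var_1:=e$ and $var_1.f:=e$ the expressions $o_e$ and $f\setminus(o_{var_1}\times\top)\cup(o_{var_1},o_e)$ evaluate in $\mm_1$ to the new value of $o_{var_1}$ and the updated graph of $f$ in $\mm_2$, a direct reading of the operational semantics. For $y{:}\,var_1:=var_2.f$ and $y{:}\,var:=new$ the refined relation $\leadsto_{\bar{d}}$ pins $d_y$ to be $f^{\mm_1}(o_{var_2}^{\mm_1})$, respectively the freshly allocated cell, so the added conjuncts $\exists f^-.o_{var_2}\equiv o_y$ and $o_y\sqsubseteq\neg Alloc$ hold and the substitutions (with $Alloc\mapsto Alloc\sqcup o_y$) reproduce $\mm_2$. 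The $dispose$ case reduces to the field-update and composition cases through $S_{disp}$, and the $if$ case uses that, since $S$ has no $assume$ and the run reaches $\mm_2$, no command aborts, so the guard $b$ evaluates cleanly in $\mm_1$ and $\varepsilon_b$ records its truth value; the correct branch then follows from the induction hypothesis.

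The composition case $S=S_1;S_2$ is where matching the content symbols pays off. I choose the intermediate structure $\mm'$ to agree with $\mm_1$ on $\tau^{rem}$; the relation permits this, and it still reaches the given $\mm_2$ because the program-relevant part of $\mm'$ is determined by $\bar{d}_{Y_{S_1}}$. I then apply the induction hypothesis to $S_2$ from $\mm'$ to $\mm_2$ on $\varphi$, and to $S_1$ from $\mm_1$ to $\mm'$ on $\Psi_{S_2}(\varphi)$, treating $\bar{d}_{Y_{S_1}}$ and $\bar{d}_{Y_{S_2}}$ as external constants via the disjoint label set $Y$ (this is precisely what the separate $Y$ in the statement is for). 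Since $\Psi_{S_2}(\varphi)$ is again a plain $\L$-formula with no $R^{ext}$ occurrences, the two applications compose without interference and give $\Psi_{S_1}(\Psi_{S_2}(\varphi))=\Psi_S(\varphi)$.

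Finally I would lift the $\Psi$-version to the stated $\Phi$-version. Given arbitrary $\mm_1,\mm_2$ with $\langle S,\mm_1\rangle\leadsto_{\bar{d}_{Y_S}}\mm_2$, let $\tilde{\mm}_1$ be $\mm_1$ with every $R\in\tau^{rem}$ reinterpreted as $R^{\mm_2}$. Because $S$ does not touch $\tau^{rem}$, running $S$ on $\tilde{\mm}_1$ yields exactly $\mm_2$, and now $\tilde{\mm}_1,\mm_2$ agree on $\tau^{rem}$; so the $\Psi$-version gives $\mm_2\models\varphi$ iff $\langle\tilde{\mm}_1,\bar{d}_Y,\bar{d}_{Y_S}\rangle\models\Psi_S(\varphi)$. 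Since $(R^{ext})^{\mm_1}=R^{\mm_2}=R^{\tilde{\mm}_1}$ while $\mm_1$ and $\tilde{\mm}_1$ agree on all other symbols, evaluating $\Psi_S(\varphi)[R/R^{ext}]=\Phi_S(\varphi)$ over $\langle\mm_1,(\bar{R}^{ext})^{\mm_1},\bar{d}_Y,\bar{d}_{Y_S}\rangle$ coincides with evaluating $\Psi_S(\varphi)$ over $\tilde{\mm}_1$, which is the claim. I expect the main obstacle to be the bookkeeping around $\tau^{rem}$ and the constants $\bar{d}_{Y_S}$: lining up the substitution lemma demands precision about which structure each symbol lives in, and it is exactly this separation---proving the $\Psi$-version with matched content symbols before introducing $R^{ext}$---that keeps the composition case clean instead of forcing a problematic double application of $R\mapsto R^{ext}$.
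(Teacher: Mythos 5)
Your proof is correct and follows the same basic route as the paper's: structural induction on $S$, with each atomic case discharged by the substitution lemma and the refined relation $\leadsto_{\bar{d}_{Y_S}}$ pinning down the witnesses $d_y$ for $var_1:=var_2.f$ and $var:=new$. The one organizational difference is where the $\tau^{rem}$ bookkeeping happens: you factor out a $\Psi$-version in which $\mm_1$ and $\mm_2$ agree on the content symbols (choosing the intermediate structure in the $S_1;S_2$ case to agree with $\mm_1$ on $\tau^{rem}$, which the $\leadsto$ relation indeed permits) and only introduce the $R^{ext}$ copies in a final lifting step, whereas the paper proves the $\Phi$-version directly and absorbs the mismatch inside the sequential-composition case by passing to an auxiliary structure $\mm_4$ obtained from the intermediate $\mm_3$ by overwriting each $R^{\mm_3}$ with $R^{\mm_2}$. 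The two are equivalent, but your factoring makes the composition case noticeably cleaner and makes explicit why a single application of the substitution $R\mapsto R^{ext}$ at the outermost level suffices, a point the paper's inline treatment leaves somewhat implicit.
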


\begin{proof}
We prove the lemma by induction.
\begin{itemize}
\item $S=skip$: $Y_S = \emptyset$, and we have $\left\langle \mm_2, \bar{d}_Y \right\rangle\models \varphi$
      iff $\left\langle \mm_1, \bar{R}^{ext}, \bar{d}_Y \right\rangle\models \Phi_S(\varphi)$, as required.
\item $S = if\,b\,then\,S_1\,else\,S_2\, fi$: depending on whether $\varepsilon_b$ is true or false, $\Phi_{S_1}(\varphi)$ or $\Phi_{S_2}(\varphi)$
      should be used.
\item $var_1 := e$, where $e$ is a variable $var_2$ or $\nil$: every reference to $var_1$ in $\varphi$ is replaced
      by a reference to $var_2$ or $\nil$, respectively.
\item $var_1 := var_2.f$: every reference to $var_1$ in $\varphi$ is replaced with a reference to $o_y$, whose interpretation is $d_y$,
      in accordance with $\leadsto_{\bar{d}_{Y_S}}$, which requires that $d_y$ be the result of applying $f$ on $var_2$.
\item $var_1.f := e$, where $e$ is a variable $var_2$ or $\nil$: the function symbol $f$ is updated by
      removing the current value of $f$ on $var_1$ by subtracting $(o_{var_{1}}\times\top)^{\mm_1}$ from $f^{\mm_1}$
      and setting the new value explicitly
      by adding the pair $(o_{var_1}^{\mm_1},o_{e}^{\mm_1})$ to $f^{\mm_1}$.
\item $S= var := new$ with label $y$: $Y_S = \{y\}$ and $\bar{d}_{Y_S}=(d_y)$. By the definition of $\leadsto_{\bar{d}_{Y_S}}$
      for $new$ commands, $\{d_y\} = Alloc^{\mm_2}\backslash Alloc^{\mm_1}$.
      $\Phi_S(\varphi)$ adds $o_y$ to $Alloc$ and replaces
      every reference to $var$ by a reference to $o_y$.

\item $S = dispose(var)$: $\Phi_S(\varphi)$ removes $var$ from $Alloc$, and using an application of $\Phi$
      to the program $var.f_{k_1} := \nil; \cdots; var.f_{k_w} := \nil$, sets all of the fields
      in $\varphi$ to $\nil$.
\item $S=S_1;S_2$: $\Phi_S(\varphi) = \Phi_{S_1}(\Phi_{S_2}(\varphi))$.
      Let $\mm_3$ be an memory structure such that
      $\left\langle S_1,  \mm_1 \right\rangle\leadsto_{\bar{d}_{Y_{S_1}}} \mm_3$ 
      and $\left\langle S_2,  \mm_3 \right\rangle\leadsto_{\bar{d}_{Y_{S_2}}} \mm_2$.
      We have ${\bar{d}_{Y_{S}}} ={\bar{d}_{Y_{S_1}}}\cup {\bar{d}_{Y_{S_2}}}$.  

      Consider first $\Phi_{S_2}(\varphi)$. By the induction hypothesis, 
      $\left\langle \mm_2, \bar{d}_Y \right\rangle\models \varphi$ iff
      $\left\langle \mm_3,  \bar{R}^{\mm_2}, \bar{d}_Y, \bar{d}_{Y_{S_2}} \right\rangle \models \Phi_{S_2}(\varphi)$. 

      Let $\mm_4$ be obtained from $\mm_3$ be replacing every relation $R^{\mm_3}$ with $R^{\mm_2}$ for $R\in \tau^{rem}$.
      We have 
      $\Psi_{S_2}(\varphi) \models \left\langle \mm_4,\bar{d}_Y, \bar{d}_{Y_{S_2}} \right\rangle$ 
      iff $ \Phi_{S_2}(\varphi) \models \left\langle \mm_3,  \bar{R}^{\mm_2}, \bar{d}_Y, \bar{d}_{Y_{S_2}} \right\rangle$.

      Since we have $\left\langle S_2,  \mm_4 \right\rangle\leadsto_{\bar{d}_{Y_{S_2}}} \mm_2$, 
      we can apply the induction hypothesis once again, this time on $\Psi_{S_2}$.
      We get that 
      $\left\langle \mm_4, \bar{d}_Y, \bar{d}_{Y_{S_2}} \right\rangle\models \varphi$ iff
      $\left\langle \mm_1,  \bar{d}_Y, \bar{d}_{Y_{S_2}}, \bar{d}_{Y_{S_1}} \right\rangle \models \Phi_{S_1}(\Phi_{S_2}(\varphi))$.
      Hence, 
      $\left\langle \mm_2, \bar{d}_Y \right\rangle\models \varphi$ iff
      $\left\langle \mm_1, \bar{R}^{\mm_2}, \bar{d}_Y, \bar{d}_{Y_{S}} \right\rangle \models \Phi_{S_1}(\Phi_{S_2}(\varphi))$.

\end{itemize}

\end{proof}

\begin{proof}[Proof of Lemma \ref{lem:backwards-app}]

Using Lemma \ref{lem:backwards2-app} with $Y=\emptyset$,
if $\left\langle S, \mm_1\right\rangle\leadsto_{\bar{d}_{Y_S}} \abo$, then
for every tuple of relations $\bar{U}$ interpreting $\bar{R}^{ext}$ we have
$\left\langle \mm_1, \bar{U}, \bar{d}_{Y_S}, d_{abo} \right\rangle \not\models \Phi_{\overline{S}}(\varphi\land (o_{abo} \equiv o_\false))$,
because $abo$ is set to true during the run of $\overline{S}$. 
If $\left\langle S, \mm_1\right\rangle\leadsto_{\bar{d}_{Y_S}} \mm_2$,
$\mm_2 \models \varphi$  iff
$\left\langle \mm_1, \bar{d}_{Y_S}, d_{abo} \right\rangle \models \Phi_{\overline{S}}(\varphi\land (o_{abo} \equiv o_\false))$.

Note that $d_{abo}$ is the value of $abo$ at the beginning of the run of $\overline{S}$.
Since the first command of $\overline{S}$ assigns $abo$ a new value, $d_{abo}$ plays no role (it appears because, technically, $abo$ still needs a value at the beginning of the run).

Also note that in Lemma \ref{lem:backwards--}, $\bar{d}_{Y_{\overline{S}}}$ strictly extends $\bar{d}_{Y_S}$, since
$\overline{S}$ extends $S$. However, the semantics of all of the new commands in $\overline{S}$ does not actually
depend on the relevant $d_y$ (since none of them of $new$ commands or assignments of the form $var_1.var_2.f$). 
Hence, any extension of  $\bar{d}_{Y_S}$ into $\bar{d}_{Y_{\overline{S}}}$ will do. 
\end{proof}

\begin{remark}
Revisiting the example in Section \ref{se:backwards-example} with the detailed version of Lemma \ref{lem:backwards-app} in mind,
we now see that $\Theta$ is actually the backwards propagation of programs of the form $\overline{S}$. 
The backward propagation of $\overline{\lambda(\ell_l.\ell_l)}$ is similar to that presented in Section
\ref{se:backwards-example}, with 
$\Theta_{\overline{\lambda(\ell_l.\ell_l)}}(\varphi_{p-as-\ell_l}) = \Phi_{\overline{\lambda(l.l)}}(\varphi_{p-as-\ell_l}\land
(o_{abo} \equiv o_\false))$. 

\end{remark}

\section{Soundness and Completeness Theorems}\label{app:soundness-completeness}
Here we restate and prove Theorems \ref{th:soundness-completeness} and \ref{th:soundness-2}:
{
\renewcommand{\thetheorem}{\ref{th:soundness-completeness}}
\begin{theorem} \label{th:soundness-completeness-app}
 Let $G$ be a program such that $shp$ is inductive for $Init$ and $Init\models cnt(\ell_{init})$. 
The following statements are equivalent:
 \begin{enumerate}[(i)]
  \item  For all $e\in E$, $VC(e)$ holds.
  \item $\dmnCnt$ is inductive for $Init$ relative to $shp$.
 \end{enumerate}
 \end{theorem}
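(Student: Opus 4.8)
The plan is to establish the equivalence one edge at a time: for each edge $e=(\ell_1,\ell_2)\in E$, I will show that ``$VC(e)$ holds'' is equivalent to the inductive step of $g$ at $e$, where $g(\ell)=tr(\dmnCnt(\ell))\land\beta(shp(\ell))$. This suffices, because under the standing hypotheses the base case of $g$-inductiveness is automatic: $Init\models\dmnCnt(\ell_{init})$ gives $Init\models tr(\dmnCnt(\ell_{init}))$ by Lemma~\ref{lem:dl-to-ct2}, and $Init\models\beta(shp(\ell_{init}))$ since $shp$ is inductive for $Init$, so $Init\models g(\ell_{init})$. As ``$shp$ is inductive'' is assumed, statement (ii) is precisely ``$g$ is inductive for $Init$'', which in turn reduces to the conjunction over all $e$ of the inductive step of $g$ at $e$; and (i) is the conjunction over all $e$ of ``$VC(e)$ holds''. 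Hence the per-edge equivalence yields the theorem.

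The two sides unfold as follows for a fixed $e$. Writing $\varphi=\alpha(shp(\ell_2))\land\neg\dmnCnt(\ell_2)$, ``$VC(e)$ holds'' says that no memory structure over the extended vocabulary $\tau^{ext}$ satisfies $\beta(shp(\ell_1))\land tr(\dmnCnt(\ell_1))\land tr(\Theta_{\lambda(e)}(\varphi))$. The inductive step of $g$ at $e$ fails exactly when there are memory structures $\mm_1,\mm_2$ with $\mm_1\models g(\ell_1)$, $\langle\lambda(e),\mm_1\rangle\leadsto\mm_2$ and $\mm_2\not\models g(\ell_2)$. The key simplification, which I would record first, is that the hypothesis ``$shp$ is inductive'' lets me replace $g(\ell_2)$ by $\dmnCnt(\ell_2)$ along any transition: from $\mm_1\models\beta(shp(\ell_1))$ and $\langle\lambda(e),\mm_1\rangle\leadsto\mm_2$, inductiveness of $shp$ forces $\mm_2\models\beta(shp(\ell_2))$, hence $\mm_2\models shp(\ell_2)$ by Lemma~\ref{lem:sl-to-dl-2-beta} and $\mm_2\models\alpha(shp(\ell_2))$ by Lemma~\ref{lem:sl-to-dl-2-alpha}. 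Thus, along such a transition, $\mm_2\not\models g(\ell_2)$ is equivalent to $\mm_2\models\neg\dmnCnt(\ell_2)$, and together with $\mm_2\models\alpha(shp(\ell_2))$ this is equivalent to $\mm_2\models\varphi$.

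The engine of both directions is the backwards-propagation Lemma~\ref{lem:backwards--}. For ``inductive step fails $\Rightarrow$ $VC(e)$ fails'', given witnesses $\mm_1\leadsto\mm_2$ with $\mm_1\models g(\ell_1)$ and $\mm_2\models\varphi$, part~(1) of Lemma~\ref{lem:backwards--} gives $\langle\mm_1,(\bar R^{ext})^{\mm_1}\rangle\models\Theta_{\lambda(e)}(\varphi)$ with $(R^{ext})^{\mm_1}=R^{\mm_2}$; since $\beta(shp(\ell_1))$ and $tr(\dmnCnt(\ell_1))$ mention only the base vocabulary they transfer from $\mm_1$ to this extension, and applying $tr$ via Lemma~\ref{lem:dl-to-ct2} exhibits a model of the bracketed formula, so $VC(e)$ fails. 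For the converse, a memory structure $\mm$ refuting $VC(e)$ restricts to $\mm_1=\mm|_\tau$ with $\mm_1\models\beta(shp(\ell_1))\land tr(\dmnCnt(\ell_1))$; since $\mm\models\Theta_{\lambda(e)}(\varphi)$, part~(2) rules out abortion, so there is a transition $\langle\lambda(e),\mm_1\rangle\leadsto\mm_2$, which I would choose---using the freedom of $\leadsto$ on the $\tau^{rem}$-symbols and on allocation choices---so that $R^{\mm_2}$ coincides with the $R^{ext}$-values of $\mm$; part~(1) then yields $\mm_2\models\varphi$, so $\mm_2\not\models g(\ell_2)$ and the inductive step fails.

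The step I expect to be the main obstacle is exactly this last matching between models of $\Theta_{\lambda(e)}(\varphi)$ and genuine transitions $\langle\lambda(e),\mm_1\rangle\leadsto\mm_2$. The difficulty is twofold: I must produce a post-state $\mm_2$ whose interpretations of the $\tau^{rem}$-symbols and whose allocation choices are \emph{exactly} those encoded by the model, and I must keep track of the interplay with the $\MemPool$ and $Alloc$ constraints as cells migrate during $\lambda(e)$. This is precisely why the simplified Lemma~\ref{lem:backwards--} is insufficient and the precise version in Appendix~\ref{app:backward} is needed: the refined relation $\leadsto_{\bar d_Y}$ pins down the nondeterministic allocation witnesses by labels $\bar d_Y$, and the auxiliary $abo$-flag encodes non-abortion inside $\Theta$, so that models of $\Theta_{\lambda(e)}(\varphi)$ and labelled transitions stand in exact correspondence. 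I would therefore carry out the reconstruction through that refined machinery rather than through the simplified statement.
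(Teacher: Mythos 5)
Your proposal is correct and follows essentially the same route as the paper's proof: the same per-edge reduction, the same use of $shp$-inductiveness together with Lemmata~\ref{lem:sl-to-dl-2-alpha} and~\ref{lem:sl-to-dl-2-beta} to replace $g(\ell_2)$ by $\alpha(shp(\ell_2))\land\neg\dmnCnt(\ell_2)$ along a transition, and the same appeal to the precise (labelled, $abo$-instrumented) version of the backwards-propagation lemma in both directions, including the w.l.o.g.\ matching of $R^{\mm_2}$ with the $R^{ext}$-interpretations. Your explicit treatment of the base case $Init\models g(\ell_{init})$ is a minor addition the paper leaves implicit.
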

}
\begin{proof}
 Assume $VC(e)$ holds for every $e\in E$. Let 
 $e=(\ell_1,\ell_2)\in E$ and memory structures 
 $\mm_1$ and $\mm_2$ such that $\mm_1\models g(\ell_1)$ and 
 $\left\langle \lambda(e),\mm_1\right\rangle \leadsto \mm_2$. 
 Since $\mm_1\models g(\ell_1)$ we have 
 $s_{\mm_1},h_{\mm_1}\models shp(\ell_1)$ and 
 $\mm_1\models g(\ell_1)$. Since $shp$ is inductive, $s_{\mm_2},h_{\mm_2} \models shp(\ell_2)$. 
 There exists a tuple $\bar{d}_{Y_{\lambda(e)}}$ such that 
 $\left\langle \lambda(e),\mm_1\right\rangle \leadsto_{\bar{d}_{Y_{\lambda(e)}}} \mm_2$. 

Let $(\bar{R}^{ext})^{\mm_1}$ be the tuple
of copies of $R^{\mm_2}$ relations from Section \ref{se:backwards--}, i.e. $(\bar{R}^{ext})^{\mm_1}$ is
$
 \big((R^{ext})^{\mm_1} : (R^{ext})^{\mm_1}=R^{\mm_2}\mbox{ and }R\in\tau\backslash\tauField\big)
$.
Let $\nn = \left\langle \mm_1,(\bar{R}^{ext})^{\mm_1},\bar{d}_{Y_{\lambda(e)}}\right\rangle$.  
Since $s_{\mm_1},h_{\mm_1}\models shp(\ell_1)$ and $\mm_1\models \dmnCnt(\ell_1)$,
$\nn \models \beta(shp(\ell_1))\land tr(cnt(\ell_1))$. 

By Lemma \ref{se:backwards-ghost-app},
 $\nn\models
 \Theta_{\lambda(e)}\big(\alpha(shp(\ell_2)) \land \neg  \dmnCnt(\ell_2)\big)$
 iff $\mm\models \alpha(shp(\ell_2)) \land \neg  \dmnCnt(\ell_2)$. 
 Since 
 $\nn\models VC(e)$
 and $\nn\models \beta(shp(\ell_1))\land tr(cnt(\ell_1))$,
 it must be that 
 $\nn\not\models 
 \Theta_{\lambda(e)}\big(\alpha(shp(\ell_2)) \land \neg  \dmnCnt(\ell_2)\big)$,
 so $\mm\not \models \alpha(shp(\ell_2)) \land \neg  \dmnCnt(\ell_2)$. 
 Since $\mm\models \beta(shp(\ell_2))$, in particular 
 $\mm \models \alpha(shp(\ell_2))$. Hence
 $\mm\models \dmnCnt(\ell_2)$. 
 We get that $\dmnCnt$ is inductive for $Init$ relative to $shp$.

 Conversely, assume $\dmnCnt$ is inductive for $Init$ relative to $shp$.
 Assume for contradiction that there exists $e=(\ell_1,\ell_2)\in E$ such that $VC(e)$ does not hold
 Then there exists a memory structure $\nn$ such that 
  \begin{eqnarray*}
    \nn &\models &  \beta(shp(\ell_1))\land tr(\dmnCnt(\ell_1)) \\ 
     & &  \land tr\left( \Theta_{\lambda(e)}\big(\alpha(shp(\ell_2)) \land \neg  \dmnCnt(\ell_2)\big)\right)
 \end{eqnarray*}
 Let $\nn  = \left\langle \mm_1,(\bar{R}^{ext})^{\mm_1},\bar{d}_{Y_{\lambda(e)}},d_{abo}\right\rangle$.
 Then $\mm_1$ is also a memory structure and 
 $\mm_1\models \beta(shp(\ell_1))\land tr(\dmnCnt(\ell_1))$,
 so $s_{\mm_1},h_{\mm_1}\models shp(\ell_1)$ and $\mm_1\models \dmnCnt(\ell_1)$. 
 Since $\nn \models \Theta_{\lambda(e)}(\cdots)$, $abo$ must not be set to true
 in the computation of $\lambda(e)$ starting from $\mm_1$ with $\bar{d}_{Y_{\lambda(e)}}$.
 Therefore, there exists $\mm_2$ such that 
 $\left\langle\lambda(e),\mm_1\right\rangle \leadsto_{\bar{d}_{Y_{\lambda(e)}}} \mm_2$. 
 Since the computation of $\lambda(e)$ is not affected by the interpretations of
 $R\in \tau\backslash\tauField$, assume w.l.o.g. that for each $R\in \tau\backslash\tauField$, 
 $R^{\mm}=(R^{ext})^{\mm_1}$. 
 Since $\left\langle\lambda(e),\mm_1\right\rangle \leadsto \mm_2$
 and $s_{\mm_1},h_{\mm_1}\models shp(\ell_1)$, we get $s_{\mm_2},h_{\mm_2}\models shp(\ell_2)$.  
 By Lemma \ref{lem:backwards-app}, $\mm_2 \models \alpha(shp(\ell_2)) \land \neg  \dmnCnt(\ell_2)$. 
 In particular, $\mm_2 \not\models  \dmnCnt(\ell_2)$, in contradiction to $\dmnCnt$ 
 being inductive relative to $shp$. 
 
 \end{proof}

For every $e=(\ell_1,\ell_2)\in E$, let $Reach(e)$ be the set of memory structures 
$\mm\in Reach(\ell_2)$
for which there exist
 $\mm_{init} \in \Init$ and a path $P$ in $G$ starting at $\linit$ and ending with $e$
 such that $\left\langle P,\mm_{init}\right\rangle\leadsto^* \mm$.

{
\renewcommand{\thetheorem}{\ref{th:soundness-2}}
\begin{theorem}\label{th:soundness-2-app}
Let $G$ be a program such that $shp$ is inductive for $Init$ and $Init\models cnt(\ell_{init})$. 
 If for all $e\in E$,
 $VC(e)$ holds, then for $\ell\in V$, $Reach(\ell)\models \dmnCnt(\ell)$.
 \end{theorem}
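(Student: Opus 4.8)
The plan is to derive the statement from Theorem~\ref{th:soundness-completeness} by an induction on the length of computation paths, strengthening the invariant that is carried through the induction.

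First I would invoke Theorem~\ref{th:soundness-completeness}: since $shp$ is inductive for $Init$, $Init\models cnt(\linit)$, and every $VC(e)$ holds, that theorem yields that $\dmnCnt$ is inductive for $Init$ relative to $shp$. By Definition~\ref{def:inductive-annotation} this means precisely that the map $g:V\to\ctwotree$ with $g(\ell)=tr(\dmnCnt(\ell))\land\beta(shp(\ell))$ is inductive for $Init$, i.e.\ (i) $Init\models g(\linit)$, and (ii) for every edge $e=(\ell_1,\ell_2)$ and memory structures $\mm_1,\mm_2$ with $\mm_1\models g(\ell_1)$ and $\langle\lambda(e),\mm_1\rangle\leadsto\mm_2$ we have $\mm_2\models g(\ell_2)$.

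Next I would prove the \emph{strengthened} claim that $\mm\models g(\ell)$ for every $\ell\in V$ and every $\mm\in Reach(\ell)$; the theorem then follows because $g(\ell)$ entails $tr(\dmnCnt(\ell))$, and by Lemma~\ref{lem:dl-to-ct2} the formulae $tr(\dmnCnt(\ell))$ and $\dmnCnt(\ell)$ agree on all structures, so $\mm\models\dmnCnt(\ell)$. The strengthening is essential: the single-edge condition (ii) requires the full antecedent $g(\ell_1)$, including the shape component $\beta(shp(\ell_1))$, so carrying only $\dmnCnt$ through the induction would not suffice. I would prove the claim by induction on the length of a path $P$ starting at $\linit$ that witnesses $\mm\in Reach(\ell)$, i.e.\ $\langle P,\mm_{init}\rangle\leadsto^*\mm$ for some $\mm_{init}\in Init$. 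In the base case $P$ is empty, so $\ell=\linit$ and $\mm=\mm_{init}\in Init$, whence $\mm\models g(\linit)$ by (i). For the inductive step I decompose $P$ at its last edge $e=(\ell',\ell)$ as $P=P'\,e$, where $P'$ is a path from $\linit$ to $\ell'$; by the definition of $\leadsto^*$ for paths there is $\mm'$ with $\langle P',\mm_{init}\rangle\leadsto^*\mm'$ and $\langle\lambda(e),\mm'\rangle\leadsto\mm$. Then $\mm'\in Reach(\ell')$ is reached by the shorter path $P'$, so the induction hypothesis gives $\mm'\models g(\ell')$, and condition (ii) applied to the edge $e$ yields $\mm\models g(\ell)$, completing the induction.

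The only real obstacle is bookkeeping rather than mathematics: one must match the recursive, single-edge definition of inductiveness in Definition~\ref{def:inductive-annotation} with the path-level relation $\leadsto^*$, which is exactly what forces the induction to be carried on $g$ (content together with shape) instead of on $\dmnCnt$ alone. The soundness of the shape analysis---that $shp$ is inductive, so that $\beta(shp(\ell'))$ holds at every intermediate structure---is what makes the content-inductiveness hypothesis supplied by Theorem~\ref{th:soundness-completeness} applicable at each step.
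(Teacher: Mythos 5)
Your proposal is correct and follows essentially the same route as the paper's proof: invoke Theorem~\ref{th:soundness-completeness} to obtain that $\dmnCnt$ is inductive relative to $shp$, then induct on the length of the witnessing path, carrying the conjunction of the shape and content annotations (your $g(\ell)$ is exactly the paper's strengthened claim that both $shp(\ell)$ and $\dmnCnt(\ell)$ hold on $Reach(\ell)$). Your explicit remark that the strengthening to $g$ is what makes the single-edge inductiveness condition applicable is a correct and useful clarification of what the paper does implicitly.
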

}
 
 \begin{proof}
By Theorem \ref{th:soundness-completeness-app}, $\dmnCnt$ is inductive for $Init$ relative to $shp$.  
 Let $\mm \in Reach(\ell)$, and let $P$ be a path and $\mm_{init}\in \Init$ as guaranteed for members of $Reach(\ell)$.
We prove the following claim by induction on the length of $P$:
\begin{quote}
 If for all $e\in E$,
 $VC(e)$ holds, then for $\ell\in V$, $Reach(\ell)\models \dmnCnt(\ell)$
 and $Reach(\ell)\models shp(\ell)$. 
 
\end{quote}

If $P$ is empty, then $\mm\in Init$ and the claim holds.

If $P$ is not empty, let $e=(\ell_0,\ell)$ be the last edge of $P$, and let
$P_0$ be the path obtained from $P$ by removing $e$.
Let $\mm_0$ be a memory structure such that
 $\left\langle  P_0, \mm_{init} \right\rangle \leadsto^* \mm_0$
 and $\left\langle  \lambda(e), \mm_0 \right\rangle \leadsto \mm$.
 Let $\bar{d}_{Y_{\lambda(e)}}$ be a tuple of $M$ elements such that 
 $\left\langle  \lambda(e), \mm_0 \right\rangle \leadsto_{\bar{d}_{Y_{\lambda(e)}}} \mm$.
By the induction hypothesis, $s_{\mm_0},h_{\mm_0}\models shp(\ell_0)$ and
$\mm_0\models \dmnCnt(\ell_0)$.
Since $shp$ is inductive, and since $\dmnCnt$ is inductive relative to $shp$,
$s_{\mm},h_{\mm}\models shp(\ell)$.
and 
$\mm\models \dmnCnt(\ell)$.

\end{proof}

\begin{remark}
In the proofs in this appendix only case 1. of Lemma \ref{lem:backwards-app} is used.
The purpose of case 2. of Lemma \ref{lem:backwards-app} is to make the verification conditions less strict, in
the sense that they require nothing of aborted executions.
\end{remark}

\fi

\end{document}